\documentclass[english,prodmode,acmtalg]{acmsmall}

\usepackage[ruled]{algorithm2e}

\SetAlFnt{\small}
\SetAlCapFnt{\small}
\SetAlCapNameFnt{\small}
\SetAlCapHSkip{0pt}
\IncMargin{-\parindent}

\acmVolume{V}
\acmNumber{N}
\acmArticle{A}
\acmYear{2016} 
\acmMonth{12} 

\setcopyright{acmcopyright}

\doi{DOI}

\issn{1549-6325/2015}

\usepackage{versions}
\includeversion{cuts}

\usepackage{csquotes}
\MakeOuterQuote{"}

\usepackage{url}
\newtheorem{theorem*}{Theorem}
\newtheorem{lemma*}{Lemma}
\newtheorem{corollary*}{Corollary}
\newdef{assumption}[theorem]{Assumption}

\usepackage{graphicx}
\usepackage{breakcites}
\usepackage{dirtree}

\usepackage{float}
\floatstyle{ruled}
\newfloat{algorithm}{tbp}{loa}
\floatname{algorithm}{Algorithm}

\usepackage{multirow}
\usepackage{array}

\usepackage{pstricks,pst-node}

\usepackage[shortcuts]{extdash}

\usepackage{caption}
\usepackage{framed}
\usepackage{amsmath}



\def\envyfree(#1,#2){\ensuremath{\textrm{EnvyFree}[#1,#2]}}
\def\envyfreevip(#1,#2){\ensuremath{\textrm{EnvyFreeVIP}[#1,#2]}}
\def\factor{\big\lceil \frac{M\ln{(1/\epsilon)}}{n}\big\rceil}
\def\equalize(#1){\ensuremath{\textrm{Equalize}(#1)}}


\newcommand{\piece}[1]{\ensuremath{\widehat{#1}}}
\newcommand{\piececut}[2]{\ensuremath{\widehat{#1}_{#2}}}
\newcommand{\agent}[1]{\ensuremath{A_{#1}}}
\newcommand{\agentspiece}[2]{\ensuremath{\piece{#2}_{#1}}}

\newcommand{\agentnode}[1]{\circlenode[linecolor=white]{#1}{\textcolor{blue}{#1}}} 
\newcommand{\piecenode}[1]{\dianode[linecolor=white]{#1}{\textcolor{blue}{\piece{#1}}}}
\newcommand{\piecenodecut}[2]{\dianode[linecolor=white]{#1#2}{\textcolor{blue}{\piececut{#1}{#2}}}}
\newcommand{\like}[2]{\ncline[linecolor=green!50!black!50]{->}{#1}{#2}} 
 
\newcommand{\likemaybe}[2]{\ncline[linecolor=gray,linestyle=dotted]{->}{#1}{#2}} 

\newcommand{\threeagents}{
	\agentnode{A} & \agentnode{B} & \agentnode{C} \\[12mm]
	\piecenode{1} & \piecenode{2} & \piecenode{3} 
}

\newcommand{\threeagentsdomination}{
	& \agentnode{A} &   \\
	\agentnode{B} &               & \agentnode{C}
}

\newcommand{\fouragents}{
	\agentnode{A} & \agentnode{B} & \agentnode{C} & \agentnode{D} \\[12mm]
}


\newcommand{\fouragentsdomination}{
	& \agentnode{A} &   \\
	\agentnode{B} &               & \agentnode{C} \\
	& \agentnode{D} &
}

\newcommand{\fiveagentsdomination}{
	& \agentnode{B} &   \\
	\agentnode{D} &               & \agentnode{E} \\
	& \agentnode{C} & \agentnode{A}
}


\newcommand{\agentcuts}[3]{
	\psline(0,5)(80,5)
	\rput[t](10,4){A}
	\rput[t](20,4){B}
	\rput[t](30,4){C}
	\rput[t](50,4){#1}
	\rput[t](60,4){#2}
	\rput[t](70,4){#3}
}

\newcommand{\pieceframe}[3]{
	\psframe[linecolor=black!30,fillstyle=solid,fillcolor=yellow!30,hatchangle=90,hatchcolor=yellow](#2,0)(#3,10)
	\rput[b](! #2 #3 add 2 div 6){#1}
}

\newcommand{\agentspieceframe}[4]{
	\pieceframe{\agentspiece{#1}{#2}}{#3}{#4}
}

\newcommand{\range}[2]{\in\{#1,\dots,#2\}}


\newcommand{\citet}[1]{\citeN{#1}}
\newcommand{\citep}[1]{\cite{#1}}

\begin{document}
	
\markboth{Erel Segal-Halevi et al.}{Bounded-Time Envy-Free Cake-Cutting with Free Disposal}

\title{Waste Makes Haste: Bounded Time algorithms for Envy-Free Cake Cutting with Free Disposal}

\author{
	EREL SEGAL-HALEVI and AVINATAN HASSIDIM and YONATAN AUMANN
	\affil{\\ Bar-Ilan University, Ramat-Gan 5290002, Israel}
}

\begin{abstract}
We consider the classic problem of envy-free division of a heterogeneous good ("cake") among several agents. It is known that, when the allotted pieces must be connected, the problem cannot be solved by a finite algorithm for 3 or more agents. The impossibility result, however, assumes that the entire cake must be allocated. In this paper we replace the entire-allocation requirement with a weaker \emph{partial-proportionality} requirement: the piece given to each agent must be worth for it at least a certain positive fraction of the entire cake value. We prove that this version of the problem is solvable in bounded time even when the pieces must be connected. We present simple, bounded-time envy-free cake-cutting algorithms for: (1) giving each of $n$ agents a connected piece with a positive value; (2) giving each of 3 agents a connected piece worth at least 1/3; (3) giving each of 4 agents a connected piece worth at least 1/7; (4) giving each of 4 agents a disconnected piece worth at least 1/4; (5) giving each of $n$ agents a disconnected piece worth at least $(1-\epsilon)/n$ for any positive $\epsilon$.
\end{abstract}

%
%
\begin{CCSXML}
 	<ccs2012>
 	<concept>
 	<concept_id>10003752.10003809</concept_id>
 	<concept_desc>Theory of computation~Design and analysis of algorithms</concept_desc>
 	<concept_significance>500</concept_significance>
 	</concept>
 	</ccs2012>
\end{CCSXML}

\ccsdesc[500]{Theory of computation~Design and analysis of algorithms}
%
%
 

\terms{Algorithms, Economics}

\keywords{Cake-cutting, envy-free, fair division, finite algorithm, perfect matching}

\acmformat{Erel Segal-Halevi, Avinatan Hassidim, and Yonatan Aumann, 2016. Waste Makes Haste: Bounded Time algorithms for Envy-Free Cake Cutting with Free Disposal.}

\begin{bottomstuff}
A preliminary version of this paper appeared in the proceedings of AAMAS 2015 \citep{SegalHalevi2015Waste}. In the present paper, the algorithms have been greatly shortened and simplified. Algorithms 3, 4 and 5 are new.

This paper is supported in part by the Doctoral Fellowships of Excellence Program, Mordecai and Monique Katz Graduate Fellowship Program, ISF grant 1083/13, ISF grant 1241/12 and BSF grant 2012344.
\end{bottomstuff}
\maketitle

\pagebreak{}

\section{Introduction}
Fair cake-cutting is an active field of research with applications in mathematics, economics, and recently also in AI \citep{Procaccia2015Cake}. The basic setting considers a heterogeneous good, usually described as a one-dimensional interval, that should be divided among several agents. The different agents may have different preferences over the possible pieces of the good. The goal is to divide the good among the agents in a way that is deemed ``fair''.
Fairness can be defined in several ways, of which \emph{proportionality} and \emph{envy-freeness} are the most commonly used.

\emph{Proportionality} means that each agent gets at least its ``fair-share'' of the good, i.e.\ with $n$ agents, the piece allotted to each agent is worth for him at least $1/n$ of the value of the entire good.  \emph{Envy-freeness} means that every agent believes that its piece is weakly better than any other piece --- no agent would prefer to get a piece allotted to another agent. An additional requirement in cake-cutting, particularly relevant when the divided resource is land, is \emph{connectivity} --- each agent must be given a single contiguous piece.

Proportional division is a relatively easy task, and the initial work of \citet{Steinhaus1948Problem} already provided an algorithm for $n$ agents with connected pieces. The algorithm works in a \emph{query model}. It asks the agents queries of two types: "what is the value of piece $X$ for you?" and "what piece is worth a fraction $r$ of the cake for you?", and proceeds according to their replies (the query model has been formalized later by \citet{Robertson1998CakeCutting}). The number of queries required by Steinhaus' algorithm is polynomial in the number of agents. 

Envy-free division, on the other hand, turns out to be much more challenging. With connected pieces, the only algorithm for envy-free division is an infinite one; that is, it may require an infinite number of queries to reach an envy-free division \citep{Su1999Rental}. Indeed, \citet{Stromquist2008Envyfree} proved that this is necessarily so; any algorithm for computing an envy-free division with connected pieces must require an infinite number of queries on some inputs. This is so even when there are only 3 agents!

A closer examination of these discouraging result reveals that it critically relies on the assumption that \emph{the entire cake must be divided}. In many practical situations, it may be possible to leave some parts of the cake undivided, a possibility termed \emph{free disposal}. 
If, for example, your children spend too much time quarreling over the single cherry on top of the cake, one practical solution is to eat that cherry yourself and divide only the rest of the cake. As another example, when dividing land it is usually possible (and sometimes even desirable) to leave some parts of the land unallocated, so that they can be used freely by the public. The question of interest in this paper is thus: 
\begin{quote}
If free disposal is allowed, can an envy-free allocation be computed using a bounded number of queries?
\end{quote}
This question, however, turns out to have a trivial, but uninteresting, answer;  It is always possible to give nothing to all agents, which is an envy-free allocation.  Thus, the interesting question is whether it is possible to devise a bounded-time algorithm for envy-free division in which each agent gets a \emph{strictly positive} value.

\subsection{Results}

Our first algorithm provides an affirmative answer to this question:

\begin{theorem*}[Section \ref{sec:n-agents}]
\label{thm:n-agents} 
Assuming free disposal, it is possible to divide a cake envy-freely among $n$ agents, giving each agent a connected piece with a value of at least $1/2^{n-1}$, using 
\begin{cuts}
$2^{n-1}-1$ cuts and
\end{cuts}
$O(n\cdot 2^n)$ queries.
\end{theorem*}

Having established that bounded-time algorithms indeed exist, we next consider the \emph{quality} of the solution they offer. We measure the quality of an algorithm by the value guarantee it gives to each agent. We say that an allocation is \textbf{$\envyfree(n,M)$} if it is both envy-free and gives each of $n$ agents at least $1/M$ of the total cake value (a formal definition is given in Section \ref{sec:The-Model}). The algorithm of Theorem \ref{thm:n-agents} finds $\envyfree(n,2^{n-1})$ allocations. Ideally, we would like an $\envyfree(n,n)$ allocation, which is both envy-free and proportional and thus satisfies the two most common fairness criteria. Such an allocation guarantees each agent at least $1/n$ of the total cake value, which is the largest fraction that can be guaranteed (for example, if all agents have identical valuations, it is impossible to give all of them more than $1/n$).

Our next result shows that the ideal goal of an envy-free and proportional division with connected pieces can be attained in bounded time for three agents (Section \ref{sec:3-agents}):

\begin{theorem*}\label{thm:3-agents}
Assuming free disposal, it is possible to find 
an $\envyfree(3,3)$ allocation with connected pieces 
using at most 
	\begin{cuts}
	3 cuts and 
	\end{cuts}
54 queries.
\end{theorem*}

For 4 agents, we do not have an envy-free-proportional algorithm with connected pieces. Our best result so far guarantees each agent 1/7 of the total cake value (Section \ref{sec:4-agents}):
\begin{theorem*}\label{thm:4-agents}
Assuming free disposal, it is possible to find 
an $\envyfree(4,7)$ allocation with connected pieces 
using at most 
	\begin{cuts}
	6 cuts and 
	\end{cuts}
65 queries.
\end{theorem*}
While this algorithm does not solve the envy-free-proportional problem with connected pieces, it is useful as a building block for cake-cutting with disconnected pieces, as we describe next.

Given a pre-specified agent $i$ (which we call the "VIP"), we say that an allocation is \textbf{$\envyfreevip(n,M)$} if it is both envy-free and gives the VIP a piece worth at least $1/M$ of the total cake value. We prove the following reductions (Sections \ref{sec:disconnected-4}-\ref{sec:disconnected-n}):

\begin{lemma*}\label{lem:reductions}
(a) If we can find $\envyfreevip(n,M)$ allocations using $T$ queries then we can find an $\envyfree(n,M)$ allocation using $n\cdot T$ queries.

(b) If we can find $\envyfreevip(n,M)$ allocations using $T$ queries, then for every $\epsilon>0$ we can find an $\envyfreevip(n, \frac{n}{1-\epsilon})$ allocation using $\factor \cdot T$ queries.
\end{lemma*}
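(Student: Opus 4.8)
The plan is to prove both parts with the same device: run the given $\envyfreevip(n,M)$ algorithm repeatedly in ``rounds'', where round $t$ is invoked only on the portion of the cake that rounds $1,\dots,t-1$ left unallocated (the ``disposed'' part) — call it $D_{t-1}$, with $D_0=C$. Since the region allocated in round $t$ is contained in $D_{t-1}$ and is therefore disjoint from everything allocated later, the allocation in which each agent receives the union of the pieces it collected over all rounds is again envy-free: valuations are additive, so if $(Y^{(t)}_1,\dots,Y^{(t)}_n)$ is envy-free for every $t$, then $V_i(\bigcup_t Y^{(t)}_i)=\sum_t V_i(Y^{(t)}_i)\ge \sum_t V_i(Y^{(t)}_j)=V_i(\bigcup_t Y^{(t)}_j)$ for all $i,j$. (The glued pieces are disconnected, which is consistent with this lemma feeding only the disconnected-pieces results. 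We may also assume $M\ge n$: for $M<n$ even the hypothesis is unsatisfiable, e.g.\ on identical valuations, so both claims hold vacuously.) The one extra fact I will use is that, in an envy-free allocation, any single agent's value of everything handed out in that round is at most $n$ times its value of its own piece — if agent $i$ envies nobody then $V_i(Y^{(t)}_i)\ge V_i(Y^{(t)}_j)$ for all $j$, so $\sum_j V_i(Y^{(t)}_j)\le n\,V_i(Y^{(t)}_i)$.

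For part (a) I would run exactly $n$ rounds, taking agent $t$ as the VIP in round $t$. Envy-freeness of the glued allocation is immediate from the above, so the only thing to verify is that agent $i$ ends up with at least $V_i(C)/M$. Fix $i$ and set $\sigma=\sum_{l<i}V_i(Y^{(l)}_i)$, the value agent $i$ has accumulated before its own round. Applying the ``at most $n$ times'' fact to agent $i$ in each of the rounds $1,\dots,i-1$, the value (to $i$) of all cake removed before round $i$ is at most $n\sigma$, hence $V_i(D_{i-1})\ge V_i(C)-n\sigma$. In round $i$ agent $i$ is the VIP, so $V_i(Y^{(i)}_i)\ge V_i(D_{i-1})/M\ge (V_i(C)-n\sigma)/M$, and after round $i$ agent $i$ holds at least $\sigma+(V_i(C)-n\sigma)/M=(1-n/M)\,\sigma+V_i(C)/M\ge V_i(C)/M$, using $\sigma\ge 0$ and $M\ge n$; rounds $i+1,\dots,n$ can only add more. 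Total cost: $n$ invocations, i.e.\ $n\cdot T$ queries.

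For part (b) I would instead run $R$ rounds with a fixed VIP, say agent $1$, in every round, again on the nested leftovers $C=D_0\supseteq D_1\supseteq\cdots$. Envy-freeness is exactly as before. Put $a=V_1(C)$ and $s_t=\sum_{l\le t}V_1(Y^{(l)}_1)$, so $s_0=0$ and $s_t-s_{t-1}=V_1(Y^{(t)}_1)\ge V_1(D_{t-1})/M$. Applying the ``at most $n$ times'' fact to agent $1$ in all earlier rounds gives $V_1(D_{t-1})\ge a-n\,s_{t-1}$, hence $s_t\ge s_{t-1}(1-n/M)+a/M$; writing $u_t=a/n-s_t$ this is $u_t\le (1-n/M)\,u_{t-1}$, so $u_R\le (a/n)(1-n/M)^R$ and $s_R\ge (a/n)\bigl(1-(1-n/M)^R\bigr)$. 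Thus $s_R\ge (1-\epsilon)a/n$ as soon as $(1-n/M)^R\le\epsilon$, and since $-\ln(1-n/M)\ge n/M$ it suffices to take $R=\lceil \factor\rceil$. Total cost: $R\cdot T=\factor\cdot T$ queries.

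The genuinely routine parts are the bookkeeping that makes distinct rounds act on pairwise-disjoint subcakes (so additivity glues the per-round envy-free allocations) and the observation that the $\envyfreevip(n,M)$ algorithm may be invoked on any such subcake. The only delicate point will be the telescoping inequality carrying the factor $n$: in (a) it is what lets agent $i$ already be satisfied by the end of its own round despite earlier rounds having consumed cake, and in (b) it is exactly what upgrades the naive per-round contraction factor $1-1/M$ to $1-n/M$ — and hence the query bound from $M\ln(1/\epsilon)$ down to $\factor$ — so the write-up must make explicit that envy-freeness of the reference agent against \emph{all} $n$ pieces (the VIP's included) is really providing the factor-$n$ slack, and that $M\ge n$ may be assumed.
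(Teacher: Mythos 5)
Your proposal is correct and follows essentially the same route as the paper: part (a) by running the VIP algorithm on successive leftovers with a rotating VIP and using envy-freeness (the $1/n$-share, i.e.\ your ``at most $n$ times'' fact) to show each agent reaches $1/M$ by the end of its own round, and part (b) by iterating with a fixed VIP and the recursion $s_t\ge s_{t-1}(1-n/M)+a/M$, which is exactly the paper's geometric-series argument yielding the $\factor$ bound. The only differences are cosmetic rearrangements of the same inequalities.
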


The allocations found by the algorithm of Theorem \ref{thm:4-agents} are not only $\envyfree (4,7)$ but also $\envyfreevip(4,4)$. This allows us to provide an alternative proof to a result of \citeN{Saberi2009Cutting}:\footnote{\citeN{Saberi2009Cutting} prove that their protocol is finite but do not calculate the number of queries.}
\begin{theorem*} \label{thm:4-agents-disc}
Assuming free disposal, it is possible to find 
an $\envyfree(4,4)$ allocation 
using at most 
	\begin{cuts}
	24 cuts and 
	\end{cuts}
260 queries.
\end{theorem*}
While the algorithm of Theorem \ref{thm:4-agents-disc} does not guarantee connected pieces, it uses only 24 cuts. This means that the cake is cut to at most 25 pieces, so each agent receives less than 7 pieces in average.

More importantly, the algorithm used in Theorem \ref{thm:n-agents} finds $\envyfreevip(n,2^{n-2}+1)$ allocations using $O(n\cdot 2^n)$ queries. This gives:

\begin{theorem*}\label{thm:n-agents-disc}
Assuming free disposal, for every $\epsilon>0$, it is possible to find an $\envyfree(n, \frac{n}{1-\epsilon})$ allocation using
	\begin{cuts}
	$O(4^n\ln(1/\epsilon))$ cuts and
	\end{cuts}
$O(n\cdot 4^n\ln(1/\epsilon))$ queries.
\end{theorem*}
This means that we can find, in bounded time, an envy-free division which is as close as we want to a proportional division (each agent receives at least $(1-\epsilon)/n$ of the total cake value). The number of queries is linear in the binary representation of the approximation factor ($\epsilon$).

All our algorithms are very simple and use only a single type of query: the Equalize query (defined in Section \ref{sec:tools}). The hard work is done in the correctness proofs. We view this simplicity as an additional advantage of the free disposal assumption. 

Table \ref{tab:summary} summarizes our results and compares them to some related work surveyed in the next subsection.

\newcommand{\prop}[1]{\textcolor{blue}{$1/#1$ *}}

\begin{table}
\captionsetup{singlelinecheck=off}
\small
\hskip-15mm
\begin{tabular}{|c|c|c|c|c|c|c|c|}
\hline Name & 
       Pieces & \small{Agents} & Valuations & 
       \#Cuts & \#Queries &  \small{Envy} & Prop. \\ 
\hline
\hline 
	\shortstack{
		\citet{Stromquist1980How} \\
		\citet{Robertson1998CakeCutting}\\
		\citet{Barbanel2004Cake}
	}
	& 
       Con. & 3 & General & 
       $2$ & Infinite & 0 & \prop{3}
\\ 
\hline
\citet{Su1999Rental} &
       Con. & $n$ & General & 
       $n-1$ & Infinite & 0 & \prop{n} 
\\ 
\hline 
\citet{Deng2012Algorithmic} &  
       Con. & $n$ & Lipschitz & 
       $n-1$ & $O((1/\epsilon)^{n-2})$ & $\epsilon$ & $1/n - \epsilon$
\\ 
\hline
\citet{Branzei2015Note} & 
       Con. &$n$ & Polynomials & 
       $n-1$ & $O(n^2\cdot d)$ & 0          & \prop{n}             \\ 

\hline Selfridge-Conway & 
       Dis. & 3 & General & 
       $4$ & 13 & 0 & \prop{3} \\ 
\hline \citet{Saberi2009Cutting} & 
       Dis. & 4 & General & 
       Const. & Const. & 0 & \prop{4} \\ 
\hline \shortstack{Reentrant-diminisher\\ \citep{Brams1996Fair}} &
       Dis. & $n$ & General & 
       ? & $O(n^2/\epsilon)$ & $\epsilon$ & \prop{n}   \\ 
\hline \shortstack{\citet{Brams1995EnvyFree}\\
       \citet{Robertson1998CakeCutting}\\\citet{Pikhurko2000EnvyFree}} &
       Dis. & $n$ & General & 
       Unbounded & Unbounded & 0 & \prop{n}   \\ 
\hline \citet{Kurokawa2013How} & 
       Dis. & $n$ & Piecewise-lin. & 
       ? & $O(n^6 k \ln{k})$ & 0 & \prop{n} \\ 
\hline
\hline \shortstack{\citet{Aziz2015Discrete}\\(see also our Appendix \ref{sec:entire-cake})} & 
    Dis. & 4 & General & 
    203 & 584 & 0 & \prop{4} \\ 
\hline \citet{Aziz2016Discrete} & 
       Dis. & $n$ & General & 
    ? & $n^{n^{n^{n^{n^{n}}}}}$ & 0 & \prop{n} \\ 
\hline \citet{Aziz2016Discrete} & 
      Dis. & $n$ & General & 
    ? &   $n^{n+1}$ & 0 & \prop{n} \\ 
\hline \citet{Aziz2016Discrete} & 
       Con. & $n$ & General & 
    $n-1$ & $n^{n+1}$ & 0 & $1/(3n)$ \\ 
\hline
\hline Section \ref{sec:n-agents} &
       Con. & $n$ & General & 
       $2^{n-1}-1$ & $O(n\cdot 2^n)$ & 0 & $1 / 2^{n-1}$ \\ 
\hline Section \ref{sec:3-agents} & 
       Con. & 3 & General & 
       3 & 54 & 0 & \prop{3} \\ 
\hline Section \ref{sec:4-agents} &
       Con. & 4 & General & 
       6 & 65 & 0 & $1/7$ \\ 
\hline Section \ref{sec:disconnected-4} &
       Dis. & 4 & General & 
       24 & 260 & 0 & \prop{4} \\ 
\hline Section \ref{sec:disconnected-n} &
       Dis. & $n$ & General & 
       $O(4^n\ln(1/\epsilon))$ & $O(n\cdot 4^n \ln{(1/\epsilon)})$ & 0 & \small{$1/n-\epsilon/n$} \\ 
\hline 
\end{tabular}
\protect\caption[comparison table]{
Envy-free cake-cutting algorithms.
Top section shows algorithms developed prior to this paper.
Middle section shows algorithms developed after the conference version of this paper.
Bottom section shows the algorithms presented in this paper. \label{tab:summary}
\\
\\
\textbf{Legend:}
\begin{itemize}
\item Pieces column: whether the pieces are \textbf{Con}nected or \textbf{Dis}connected.
\item Queries column: $d$ and $k$ are parameters of the valuation functions (maximum degree of polynomials and number of pieces, respectively).
\item Envy column: $\epsilon$ is an additive approximation constant (every agent values other pieces at most $\epsilon$ more than its own piece).
\item Prop column: the expression is the proportion of the total cake value that is guaranteed to all agents. \prop{n} implies that the division is proportional.
\end{itemize}
\vskip 3cm
} 
\end{table}

\subsection{Related work}

\subsubsection{Earlier cake-cutting algorithms}
Cake-divisions are commonly modeled in one of two ways: in the \emph{connected} model, the algorithm must give each agent a single contiguous piece; in the \emph{unrestricted} model, the algorithm may give each agent several disconnected pieces.

Proportional division is well understood from a computational perspective.  The algorithm of \citet{Steinhaus1948Problem} generates a proportional division with connected pieces in $O(n^2)$ queries, and an improved algorithm by \citet{Even1984Note} requires only $O(n \log{n})$ queries. Later results proved that this runtime is asymptotically optimal even if disconnected pieces are allowed \citep{Woeginger2007Complexity,Edmonds2011Cake}.

Envy-free division is a much harder task, even when only 3 agents are involved.  The first envy-free division algorithm for 3 agents with connected pieces was published by \citet{Stromquist1980How}. This algorithm is not discrete --- it requires the agents to simultaneously hold knives over the cake and move them in a continuous manner. This means that this algorithm cannot be accurately executed by a computer in finite time. There are simpler algorithms for the same task, e.g. \citet{Robertson1998CakeCutting}[pages 77-78] and \citet{Barbanel2004Cake}, but they also use moving-knives. A discrete and finite algorithm for envy-free division for 3 agents was constructed by Selfridge and Conway \citep{Brams1996Fair}. It requires only 13 queries\footnote{The first cutter makes 2 \emph{mark}s to create three equal pieces. The second makes 2 \emph{eval}s and 1 \emph{mark} to create two equal pieces. The third makes 2 \emph{eval}s to select the best piece. In the second phase, again the first cutter makes 2 \emph{mark}s, then the other two agents make 2 \emph{eval}s to select their best pieces.} but generates partitions with disconnected pieces.

Finding an envy-free division among four or more agents was a long-standing open problem. It was solved only in the 1990's, both for connected and disconnected pieces.  \citet{Su1999Rental} presented an algorithm, attributed to Forest Simmons, for envy-free division with connected pieces, but it is not finite --- it converges to an envy-free division after a possibly infinite number of queries. \citet{Brams1995EnvyFree}, \citet{Robertson1998CakeCutting} and \citet{Pikhurko2000EnvyFree} presented three different algorithms for envy-free division with disconnected pieces; while these algorithms are guaranteed to terminate in finite time, their run-time is not a bounded function of $n$.

Two important hardness results were proved in the 2000's. \citet{Stromquist2008Envyfree} proved that an envy-free division with connected pieces cannot be found by any finite algorithm, whether bounded or unbounded. This shows that the connectivity requirement makes the envy-free division problem strictly more difficult. Shortly afterward, \citet{Procaccia2009Thou} proved an $\Omega(n^2)$ lower bound on the query complexity of any envy-free division algorithm, even with disconnected pieces. This proved that the problem of envy-free division is strictly more difficult than the problem of proportional division.

\subsubsection{Later cake-cutting algorithms}
After the publication of the conference version of this paper \citep{SegalHalevi2015Waste}, several groundbreaking results have been published by Haris Aziz and Simon Mackenzie. Initially \citep{Aziz2015Discrete} they published the first bounded-time algorithm for envy-free division of an entire cake among 4 agents. Their algorithm uses at most 584 queries and 203 cuts, which means that the cake is cut to 204 pieces (so each agent receives 51 pieces in average). In Appendix \ref{sec:entire-cake} we provide a somewhat simpler presentation of this result, based on our Algorithm \ref{alg:4-agents}.

Later, \citet{Aziz2016Discrete} published the first bounded-time algorithm for envy-free division of an entire cake among $n$ agents. Their algorithm uses $n^{n^{n^{n^{n^{n}}}}}$ queries, which is also an upper bound on the number of cuts. The core subroutine in that paper finds an $\envyfreevip(n,n)$ allocation of a part of the cake using $n^n$ queries. By Lemma \ref{lem:reductions}, this implies that an $\envyfree(n,n)$ allocation with disconnected pieces can be found using $n^{n+1}$ queries assuming free disposal. This result provides an affirmative answer to an open question we posed at the end of \citet{SegalHalevi2015Waste}. In their latest working paper\footnote{http://arxiv.org/abs/1604.03655v7}
they present an algorithm that finds an $\envyfree(n,3n)$ allocation with connected pieces using $O(n^{n+1})$ queries, assuming free disposal. This is a great improvement over our Algorithm \ref{alg:n-agents}. Finding an $\envyfree(n,n)$ allocation with connected pieces in finite time is still an open problem.

Recently, \citet{Amantidis2018Improved} presented an algorithm for envy-free cake-cutting for four agents, which improves over the one by Aziz and Mackenzie by being simpler and requiring less cuts and queries. 

\subsubsection{Approximations}
Cake-cutters have tried to cope with the difficulty of envy-free division in several ways.

One way is to relax the envy-freeness criterion and allow a small amount of envy. \citet{Brams1996Fair}(pages 130-131) describe a re-entrant variant of Steinhaus'  algorithm which produces a division with disconnected pieces in which the envy of every agent is at most an additive constant $\epsilon$ (for every agent, the value of its piece plus $\epsilon$ is at least the value of any other piece). The number of queries is polynomial in $n$ and linear in $(1/\epsilon)$. \citet{Deng2012Algorithmic} present a similar approximation with connected pieces; here the number of queries is exponential in $n$ and polynomial in $(1/\epsilon)$.  In contrast to these results, our algorithms guarantee full envy-freeness. Our algorithm for disconnected piece also guarantees an additive approximation to proportionality. The number of queries in our approximation is exponential in $n$ but \emph{logarithmic} in $(1/\epsilon)$ (in other words, it is linear in the binary representation of the approximation constant).

A second way is to restrict the value functions of the agents. \citet{Kurokawa2013How} require the value functions to be piecewise-linear and find an envy-free division with disconnected pieces in time polynomial in the size of the representation of the value functions. \citet{Deng2012Algorithmic} require the value functions to be Lipschitz-continuous and find an approximately-envy-free division with connected pieces. \citet{Branzei2015Note} requires the value functions to be polynomials of bounded degree and finds an envy-free division with connected pieces in time polynomial in the maximum degree. In contrast to these results, our algorithms apply to arbitrary non-atomic value functions, and their runtime guarantee is a function of only the number of agents but not the peculiarities of their valuation functions.

\subsubsection{Free disposal}
The free disposal assumption was introduced into envy-free cake-cutting by \citet{Saberi2009Cutting}. They used it only for 4 agents and disconnected pieces. 

Later, free disposal has also been studied by \citet{Arzi2011Throw}. They proved that discarding some parts of the cake may allow us to achieve an envy-free division with an improved social welfare (i.e. the sum of the utilities of the agents is larger than in the no-free-disposal case). They call this phenomenon the \emph{dumping paradox}. Our paper demonstrate a different kind of a dumping paradox --- we show that dumping some parts of the cake can be beneficial not only from an economic perspective but also from a computational perspective. 

A third scenario in which free disposal is required is when the pieces must have a pre-specified geometric shape, such as a square \citep{SegalHalevi2015EnvyFree}. 

There is some related work concerning allocation of indivisible goods where the same idea of not allocating all the objects is used to get better fairness results \citep{Brams2013TwoPerson,Aziz2015Generalization}.\footnote{We thank an anonymous referee for referring us to these papers.}

Partial-proportionality was introduced by \citet{Edmonds2006Balanced} and \citet{Edmonds2008Confidently}. They used it, like us, to reduce the query complexity. Their algorithm gives each agent at least $1/(a\cdot n)$ of the total value, where $a\geq 10$ is some sufficiently large constant, with a query complexity of $O(n)$. This is better than the optimum of $O(n \log{n})$ required for finding a fully-proportional division. Their algorithm is not envy-free.

\subsubsection{Computational models}
The most prominent computational model for discrete cake-cutting is the \emph{mark-eval model} of \citet{Robertson1998CakeCutting}: an \emph{eval query} asks an agent to reveal its value for a specified piece of cake; a \emph{mark query} asks an agent to mark a piece of cake with a specified value. Our algorithms use a single primitive query --- Equalize (defined in Section \ref{sec:tools}). We prove in Lemma \ref{lem:equalize} that Equalize can be implemented by a bounded number of mark-eval queries, so all our algorithms are bounded in the standard model.

A different model, the \emph{cut-choose model}, was recently suggested by \citet{Branzei2016Algorithmic}. In this model, a \emph{cut query} asks an agent to cut one of a subset of the cake-pieces currently on the table; a \emph{choose query} asks an agent to select one of a subset of pieces. Our Equalize query is just a bounded sequence of \emph{cut}s, so our algorithms our bounded in their model, too.

\subsection{Paper structure}
The model is formally defined in Section \ref{sec:The-Model}. The main tools used in our division algorithms, the \emph{preference graph} and the \emph{Equalize query}, are introduced in Section \ref{sec:tools}. 

Sections 4-8 are devoted to proving the theorems. For every $k\in\{1,\ldots,5\}$, Theorem $k$ is proved in Algorithm $k$ and Section $k+3$.

A detailed, computer-generated correctness proof of the four-agents algorithm of Section \ref{sec:4-agents} is given in Appendix \ref{sec:proof-4-agents}. An application of that algorithm to an envy-free division of an entire cake (as in \citet{Aziz2015Discrete}) is presented in Appendix \ref{sec:entire-cake}.

\section{Model and Notation}

\label{sec:The-Model}
The cake is assumed to be the unit interval $[0,1]$.

There are $n$ agents, denoted by \agent{1},\agent{2},...,\agent{n}. When the number of agents is small, they are denoted instead by A,B,C,... or by Alice,Bob,Carl...

An \emph{allocation} of a cake is an $n$-tuple of pairwise-disjoint subsets of the cake: $X_1\cup\dots\cup X_n \subseteq [0,1]$. When \emph{connected} pieces are required, each piece $X_i$ must be an interval; when \emph{disconnected} pieces are allowed, each piece $X_i$ may be a finite union of intervals.

Each agent \agent{i} has a preference relation $\succeq_i$ that is represented by a non-negative value measure $V_i$ on the pieces. The term "measure" implies that it is additive --- the value of a piece is equal to the sum of the values of its parts. All value measures are absolutely continuous with respect to length. This implies that all singular points have a value of 0 to all agents, i.e. there are no valuable ''atoms'' which cannot be divided. The value measures are normalized such that $\forall i: V_i([0,1])=1$. All these assumptions are standard in the cake-cutting literature.

An allocation $X$ is called \emph{envy-free} if each agent values his allocated piece at least as much as every other allocated piece:

\[
\forall i,j\in\{1,...,n\}:\,\, V_i(X_i) \geq V_i(X_j)\,\,\,\,\,\,(\textrm{Equivalently:}\,\,\,\,\,\, X_i \succeq_i X_j)
\]
We say that an allocation $X$ has a \emph{proportionality} of $1/M$ if it allocates each agent a fraction of at least $1/M$ of the total cake value:
\[
\forall i\in\{1,...,n\}:\,\, V_i(X_{i}) \geq 1/M
\]
An allocation with a proportionality of $1/n$ is usually called a \emph{proportional} allocation. 

An allocation $X$ is called \textbf{$\envyfree(n,M)$} if it is both envy-free and has a proportionality of $1/M$. Note that every envy-free allocation of the entire cake is \envyfree (n,n),\footnote{An envy-free allocation gives each agent a piece which is best (for that agent) of $n$ pieces. By the pigeonhole principle, the best of $n$ is worth at least $1/n$. Hence, an envy-free allocation of an entire cake has a proportionality of $1/n$.}  but this is not necessarily true when some cake remains unallocated.

Given a pre-specified agent $\agent{j}$,  an allocation $X$ is called \textbf{$\envyfreevip(n,M)$} with VIP $\agent{j}$ if it is envy-free, and additionally the VIP receives a value of at least $1/M$:
\begin{align*}
V_j(X_{j}) \geq 1/M
\end{align*}

\section{Tools} \label{sec:tools}
Our algorithms are described in a bottom-up approach. We first present basic tools that perform well-defined tasks, then combine these tools to get a full algorithm. We believe that the bottom-up approach may be beneficial to future cake-cutters, that may use our tools to develop improved algorithms.

\subsection{The preference graph}
At any time during the execution of an algorithm, there is a certain number of pieces on the table, which together comprise the entire cake. The \emph{preference graph} is a bipartite graph, in which the nodes in one side represent the $n$ agents and the nodes in the other side represent the pieces. The pieces are denoted as numbers with a hat, e.g. \piece{1}, \piece{2}, etc.  There is an edge from an agent \agent{k} to a piece \piece{i} if \agent{k} \emph{prefers} \piece{i}, i.e., for every piece \piece{j}: $\piece{i} \succeq_k \piece{j}$. Note that an agent can "prefer" two or more pieces. This means that the agent is indifferent between these pieces but values any of them more than any other piece. Here are two possible preference graphs for three agents:\\

\begin{center}
\begin{psmatrix}[colsep=3cm,rowsep=0mm]
	\begin{psmatrix}[colsep=0cm,rowsep=0mm]
		\threeagents{}
	
		\like{A}{1}\like{A}{2}\like{A}{3}
		\like{B}{2}
		\like{C}{3}
	\end{psmatrix}
& 
	\begin{psmatrix}[colsep=0cm,rowsep=0mm]
		\threeagents{}
	
		\like{A}{1}\like{A}{2}\like{A}{3}
		\like{B}{3}
		\like{C}{3}
	\end{psmatrix}
\end{psmatrix}
\end{center}

Both graphs may be the result of Alice cutting the cake to 3 pieces which are equal in her eyes. In the left graph, Bob and Carl each prefer a different piece; in the right graph, they prefer the same piece (\piece{3}).

A \emph{saturated matching} in a bipartite graph is a subset of the edges, in which each agent-node has a single neighbor and each piece-node has at most a single neighbor. A saturating matching in the preference graph corresponds to an envy-free allocation of a part of the cake, since every agent is allocated a preferred piece.

A well-known tool for proving the existence of saturated matchings in bipartite graphs is \emph{Hall's marriage theorem}. This theorem, applied to our setting, implies the following lemma:

\begin{lemma}
\label{lem:unknown0x}
If for every $k\range{1}{n}$, every group of $k$ agents jointly prefers at least $k$ pieces, 
then an envy-free allocation exists. 
\end{lemma}

In a preference graph, Hall's condition is always satisfied for groups of $k=1$ agents since every agent has at least one preferred piece. 

In the left graph above, Hall's condition is also satisfied for every group of 2 or 3 agents; this means that an envy-free allocation exists. Indeed, the allocation A-\piece{1}, B-\piece{2} and C-\piece{3} is envy-free.

In the right graph above, Hall's condition is violated by the group \{B,C\}. In this case, to get an envy-free allocation, the graph should be \emph{transformed} in order to create a graph that meets Hall's condition. The main query we use to transform the preference graph is the \emph{Equalize} query, which is described in the next subsection. 

\subsection{The Equalize query}  \label{sub:equalize}
Given an integer $k\geq 2$, the query \equalize(k) asks an agent to mark zero or more pieces such that, if the pieces are cut according to these marks, that agent will have at least $k$ best pieces. For example, in the right graph above, an \equalize(2) query to Bob implies the following question: "where would you cut piece \piece{3}, your currently favorite piece, such that you will have two equally-best pieces?". 

Suppose Bob's second-best piece is \piece{2}. Bob can answer the \equalize(2) question in one of two ways:

\begin{enumerate}
\item If $V_B(\piece{3}) \geq 2 V_B(\piece{2})$, then $\piece{3}$ should be cut to two pieces of equal value, which is $V_B(\piece{3})/2$.
\item Otherwise, $\piece{3}$ should be cut to two unequal pieces --- one having a value of $V_B(\piece{2})$ and the other having a smaller value $V_B(\piece{3})-V_B(\piece{2})$. 
\end{enumerate}

Note that an \equalize(2) query can be implemented by a constant number of \emph{mark} and \emph{eval} queries of the standard model \citep{Robertson1998CakeCutting}. In Section \ref{sec:n-agents} below we prove that the same is true for $k>2$; see Lemma \ref{lem:equalize}.

A third option is that $V_B(\piece{3})=V_B(\piece{2})$. In this case, no cutting is needed since Bob already has two pieces of equal value and better than the third piece. Here and in the rest of the paper, we ignore such fortunate coincidences. This does not lose generality, since it only makes it harder to find an envy-free division --- it decreases the number of edges in the preference graph and makes it harder to find a saturated matching. 

Formally, we make the following assumption about the preference graph:
\begin{assumption}\label{asm:diff}
After an agent $A$ cuts a piece \piece{i}, if an agent $B\neq A$ prefers a piece $\piece{j}\neq\piece{i}$, then $B$ does not prefer \piece{i}.
\end{assumption}
Assumption \ref{asm:diff} makes the descriptions of division algorithms simpler since it reduces the number of cases that need to be handled. We now prove that this simplicity does not lose generality.
\begin{lemma}\label{lem:asm:diff}
If there is an algorithm $P$ that finds \envyfree(n,M) allocations when Assumption \ref{asm:diff} is satisfied, then there is an algorithm $P'$ that finds \envyfree(n,M) allocations even when Assumption \ref{asm:diff} is violated.
\end{lemma}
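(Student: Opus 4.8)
The plan is to let $P'$ run $P$ as a black box on a ``sanitized'' view of the cake, changing nothing about the cake itself and returning $P$'s answer verbatim. Concretely, $P'$ executes $P$ step by step on the true instance $(V_1,\dots,V_n)$, which may violate Assumption~\ref{asm:diff}; every Equalize query that $P$ issues is forwarded to the genuine agent, and the cake is cut exactly as instructed, so at every moment the physical pieces on the table are identical to those of an honest run of $P$. The only thing $P'$ alters is the preference graph it reports to $P$: after an agent $\agent{a}$ answers an Equalize query and new sub-pieces appear, $P'$ removes from the true preference graph a set of edges so that (i)~Assumption~\ref{asm:diff} holds and (ii)~every agent still has an incident edge, and reports the resulting graph $G'$ to $P$. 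Such a set of edges always exists: Assumption~\ref{asm:diff} is violated only by an agent $\agent{b}\neq\agent{a}$ that prefers a freshly created piece $\piece{i}$ together with some other piece $\piece{j}\neq\piece{i}$, and deleting the edge from $\agent{b}$ to $\piece{i}$ removes that violation without stranding $\agent{b}$; in the degenerate case where all of $\agent{b}$'s preferred pieces were just created by $\agent{a}$'s cut, keeping exactly one of them likewise satisfies Assumption~\ref{asm:diff}. Because $G'$ satisfies Assumption~\ref{asm:diff}, $P$ runs within its specification and returns some allocation $X$, which $P'$ outputs.

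Correctness then has two short parts. Proportionality is immediate: $P'$ never touches the value measures or the pieces, so $X$ gives each agent $i$ exactly the value $V_i(X_i)\ge 1/M$ it gets in $P$'s view. For envy-freeness, observe that $G'$ is a subgraph of the true preference graph, since $P'$ only ever deletes edges and never adds any. As $P$ returns an allocation envy-free with respect to $G'$, each agent $\agent{b}$ is matched to one of its preferred pieces in $G'$, hence to one of its preferred pieces in the true graph, i.e. to a piece it weakly prefers under $V_b$ to every other allocated piece; thus $X$ is envy-free for $(V_1,\dots,V_n)$, and altogether $X$ is an \envyfree(n,M) allocation. Maintaining the true preference graph costs $P'$ only a bounded number of extra queries per cut, so $P'$ is bounded-time whenever $P$ is.

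The delicate point --- and the one I expect to be the main obstacle --- is justifying that feeding $P$ the graph $G'$ is legitimate, i.e. that $P$'s correctness guarantee, which is stated for runs where Assumption~\ref{asm:diff} holds, really applies to this simulated run rather than only to runs arising from an honest valuation profile. For the concrete algorithms of this paper this is automatic, since their correctness arguments use only the combinatorial structure of the preference graph and the semantics of the Equalize query, both of which $G'$ respects. To argue it for an arbitrary $P$, one replaces each $V_b$ by an arbitrarily small, generic perturbation that breaks $\agent{b}$'s ties in the direction chosen above while preserving all of its strict comparisons among the finitely many pieces that can arise in a bounded-time run. Running $P$ on the perturbed profile then genuinely satisfies Assumption~\ref{asm:diff}; ``strict comparisons preserved'' turns $P$'s envy-freeness on the perturbed profile into envy-freeness on $(V_1,\dots,V_n)$; and, since a bounded-time $P$ has only finitely many possible execution patterns, a compactness argument lets the perturbation tend to zero without sacrificing the exact inequalities $V_b(X_b)\ge V_b(X_j)$ and $V_b(X_b)\ge 1/M$, with a symbolic (lexicographic) tie-break making the construction fully effective.
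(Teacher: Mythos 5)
Your proposal is correct and follows essentially the same route as the paper: simulate $P$ as a black box, delete the tie edge from any non-cutter to the piece just cut so the reported graph satisfies Assumption~\ref{asm:diff}, and observe that the final matching lives in a subgraph of the true preference graph, so envy-freeness and the $1/M$ proportionality carry over. The ``delicate point'' you elaborate with a perturbation argument is exactly what the paper disposes of in one line by noting that the reduced graph is itself a possible outcome of the Equalize query, so $P$ must handle it.
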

\begin{proof}
We assume that the existing algorithm $P$ is given as a service, that accepts the current preference graph and replies with the next Equalize query to issue.

The new algorithm $P'$ uses this service to simulate $P$. It issues the Equalize query sent by $P$ to the agents, collects the agents' replies, makes the required cuts and updates the preference graph. If the new preference graph violates assumption \ref{asm:diff}, so that e.g. after agent $A$ cut \piece{i} agent $B$ prefers both $\piece{j}\neq\piece{i}$ and $\piece{i}$, then $P'$ removes the edge $B\to \piece{i}$ and sends to $P$ the reduced graph. The modified graph is also a possible outcome of \equalize(k) and it satisfies assumption \ref{asm:diff}, so $P$ must know how to handle it. Hence, eventually the simulation is terminated and the reduced preference graph has a saturated matching which corresponds to an \envyfree(n,M) allocation. The same matching is also a saturated matching on the real preference graph, since the real preference graph contains (at least) all the edges of the reduced graph. Hence, $P'$ returns an \envyfree(n,M) allocation.
\end{proof}

Assumption \ref{asm:diff} has several simple corollaries which are implicitly used below. For every set $X$ of pieces, define the \emph{last cutter} of $X$ to be the last agent who made a cut on any piece from the set $X$.
\begin{itemize}
\item If an agent $A$ prefers a set $X$ of pieces with $|X|\geq 2$, then $A$ is the last cutter of $X$.
\item Each two agent-nodes in the preference graph have at most one neighbor at common (there is at most one piece that both agents prefer).
\item If an agent cuts the cake to several equal pieces, then every other agent prefers exactly one piece (as in the graphs above).
\end{itemize}

We now return to the three-agent example. If the algorithm implements Bob's suggested cuts, the preference graph is transformed. If Bob gave an answer of type (1), it is transformed as in the left graph below; if Bob gave an answer of type (2), it is transformed as in the right graph:

\begin{center}
\begin{psmatrix}[colsep=3cm,rowsep=0mm]
	\begin{psmatrix}[colsep=0cm,rowsep=0mm]
		\threeagents{}\piecenode{4}
	
		\like{A}{1}\like{A}{2}
		\like{B}{3}\like{B}{4}
		\likemaybe{C}{1}\likemaybe{C}{2}\likemaybe{C}{3}\likemaybe{C}{4}
	\end{psmatrix}
& 
	\begin{psmatrix}[colsep=0cm,rowsep=0mm]
		\threeagents{}\piecenode{4}
	
		\like{A}{1}\like{A}{2}
		\like{B}{3}\like{B}{2}
		\likemaybe{C}{1}\likemaybe{C}{2}\likemaybe{C}{3}\likemaybe{C}{4}
	\end{psmatrix}
\end{psmatrix}
\end{center}

Note that in both cases, the edge A-\piece{3} is gone, because piece \piece{3} has been trimmed by Bob so its value for Alice is probably smaller. By Assumption \ref{asm:diff}, we ignore the fortunate coincidence in which Bob trimmed a part which happens to be worthless for Alice. The edges A-\piece{1} and A-\piece{2} remain, because these two pieces were not touched by Bob (Alice is still the last cutter of these pieces).

The dotted edges emanating from C imply that we do not know which piece is preferred by Carl after the cuts, since his previously-best piece --- \piece{3} --- has been trimmed. On the other hand, we know that Bob now prefers two pieces --- one of them is the trimmed \piece{3} and the other is another piece, which is either his previously-second-best piece \piece{2} or the new piece \piece{4}.

Even though we don't know which piece is now preferred by Carl, we can be sure that a saturated matching exists. This follows from the following lemma:

\begin{lemma}
\label{lem:unknown1x}
Suppose the $n$ agents are divided to $n-1$ agents whose preferences are known and one agent whose preferences are unknown. If for every $k\range{1}{n-1}$, every group of $k$ known agents jointly prefers at least $k+1$ pieces, then an envy-free allocation exists.
\end{lemma}
\begin{proof}
Suppose every group of $k$ known agents jointly prefers at least $k+1$ pieces. Then, for any possible preference of the unknown agent, every group of $k+1$ agents jointly prefers at least $k+1$ pieces. Additionally, every group of 1 agent always prefers at least 1 piece. Hence, by Lemma \ref{lem:unknown0x} an envy-free allocation exists.
\end{proof}

In the graphs above, there are two known agents --- A and B, and one unknown agent --- C. Each of the known agents prefers two pieces, and the two of them together prefer 3 or 4 pieces. Hence, whatever C's preferences are, a saturated matching exists and an envy-free allocation can be found.

\subsection{Example: an algorithm for 3 agents} \label{sub:equalize-example}
By now, we have described an envy-free division algorithm for three agents. The algorithm can be succinctly summarized by the following two statements:

\begin{framed}
Alice: \equalize(3)

Bob: \equalize(2)
\end{framed}

In words: the algorithm asks Alice to cut the cake to 3 equal pieces in her eyes, then asks Bob to cut one of these pieces in order to make 2 equally-best pieces in his eyes. The outcome always looks like one of the preference graphs above, which means that a saturated matching exists and each agent can be allocated a best piece.

The last step of the algorithm is to actually find the matching and implement the corresponding envy-free allocation. To do this in our case, it is sufficient to ask Carl to pick his best piece, then ask Bob to pick one of his best pieces (which must be the piece that he trimmed, if it is still available), then ask Alice to pick a remaining piece. In the rest of this paper, we suppress this last step from the description of our algorithms. Since a maximum matching in a bipartite graph can always be found in polynomial time, it is sufficient to prove that the algorithm guarantees that a saturated matching exists.

\subsection{The Envy-Free-Proportionality Lemma}
We now calculate the proportionality of the resulting allocation --- the value guarantee per agent. This is based on a general lemma which we call the EFP (Envy-Free-Proportionality) lemma:

\begin{lemma} (EFP Lemma)
If a cake is partitioned to a set of $M \geq n$ pieces and each agent receives a single preferred piece from that set, then the allocation is \envyfree(n,M).
\end{lemma}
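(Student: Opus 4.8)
The plan is to verify the two defining properties of an $\envyfree(n,M)$ allocation separately: envy-freeness and proportionality of $1/M$. Envy-freeness is immediate from the hypothesis: each agent receives a piece that it weakly prefers to every piece in the partition, and in particular to every other allocated piece (since all allocated pieces come from the same partition), so $V_i(X_i) \geq V_i(X_j)$ for all $i,j$. This is really just the observation that a saturated matching in the preference graph corresponds to an envy-free allocation, which was already noted in Section \ref{sec:tools}.

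The substantive part is the proportionality bound. Here I would use a pigeonhole / averaging argument together with additivity of the value measures. Fix an agent $\agent{i}$ and let $\piece{1},\dots,\piece{M}$ be the $M$ pieces of the partition. Since these pieces are pairwise disjoint and together comprise the entire cake, additivity and normalization give $\sum_{j=1}^{M} V_i(\piece{j}) = V_i([0,1]) = 1$. Hence the average value, for $\agent{i}$, of a piece in the partition is $1/M$, so there exists at least one piece worth at least $1/M$ to $\agent{i}$. But $\agent{i}$ receives a \emph{preferred} piece — one that is worth at least as much to $\agent{i}$ as any other piece in the partition, hence in particular at least as much as the best piece and a fortiori at least the average. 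Therefore $V_i(X_i) \geq 1/M$. Since $i$ was arbitrary, the allocation has proportionality $1/M$, and combined with envy-freeness it is $\envyfree(n,M)$ by definition.

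I do not expect any real obstacle here; the only thing to be slightly careful about is that the $M$ pieces must genuinely partition the whole cake (so that the values sum to exactly $1$), which is exactly the hypothesis ``a cake is partitioned to a set of $M$ pieces.'' The condition $M \geq n$ is not even needed for the proportionality conclusion itself — it is needed only so that a saturated matching (assigning a distinct piece to each of the $n$ agents) can exist at all, which is what makes the hypothesis ``each agent receives a single preferred piece'' attainable; so I would either invoke it in that role or simply note that it is not required for the stated inequality. The proof is therefore a two-line argument: envy-freeness by the preferred-piece property, and the $1/M$ bound by averaging over a partition of total value $1$.
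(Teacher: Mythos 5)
Your proof is correct and follows essentially the same route as the paper's: envy-freeness is immediate because each agent gets a (weakly) best piece, and proportionality follows from additivity of the value measures plus a pigeonhole/averaging argument showing a best piece among $M$ pieces of total value $1$ is worth at least $1/M$. Your added remark about the role of $M \geq n$ (only needed for the matching to exist, not for the inequality) is a fine observation but does not change the argument.
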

\begin{proof}
Envy-freeness is obvious since each agent receives one of his best pieces. Proportionality is a result of the fact that the value functions of the agents are measures, so they are additive. The sum of the values of all pieces is the value of the entire cake. Hence, by the pigeonhole principle, the value of any best piece is at least $1/M$ of the total cake value.
\end{proof}

Going back to our three-agents algorithm, we see that the algorithm partitions the cake to $M=4$ pieces. Hence, by the EFP lemma, it generates an \envyfree(3,4) allocation. This is only a warm-up algorithm; the algorithm of Section \ref{sec:3-agents} generates an \envyfree(3,3) allocation, which are optimal (in terms of proportionality) for 3 agents.

Before continuing with more advanced division algorithms, we need two more tools: an assumption and a lemma.

\subsection{The new-pieces assumption}

In all algorithms presented in this paper, the first query is an Equalize query, asking one of the agents to cut $ֵN$ equal pieces, where $N\geq n$ is some constant (like the "Alice:Equalize(3)" in Subsection \ref{sub:equalize-example}). The following queries are \equalize(k) queries where $k\leq N$. In the rest of this paper, we make the following additional assumption on the preference graph starting with the second query:

\begin{assumption}\label{asm:newpiece}
In any query after the first query, when a new piece is created by a cut, it is not the preferred piece of any agent.
\end{assumption}
So in the example of Subsection \ref{sub:equalize}, we assume that after "Bob:\equalize(2)" the preference graph looks like the rightmost graph. This assumption does not lose generality because, from Hall's perspective, it only makes it harder to find a saturated matching --- it concentrates the same number of edges over a smaller number of piece nodes. Formally:

\begin{lemma}\label{lem:asm:newpiece}
If there is an algorithm $P$ that finds \envyfree(n,M) allocations when Assumption \ref{asm:newpiece} is satisfied, then there is an algorithm $P'$ that finds \envyfree(n,M) allocations even when Assumption \ref{asm:newpiece} is violated.
\end{lemma}
\begin{proof}
Similarly to Lemma \ref{lem:asm:diff}, $P'$ simulates $P$ by reading the next Equalize query, issuing it to the agents and updating the preference graph.

Suppose the $m$-th query in $P$ ($m\geq 2$) is \equalize(k). This query creates $k-1$ new pieces. The algorithm $P'$ defines an injection $f_m$, which maps every new piece $\piece{j}$ to a unique original piece $f_m(\piece{j})$. By "original piece" we mean one of the pieces generated by the first Equalize query. There are at least $N$ original pieces and $k\leq N$, so an injection $f_m$ always exists. 

$P'$ then constructs a reduced preference graph, by converting any edge $A_i \to \piece{j}$ (for every agent $A_i$ and new piece \piece{j}) to an edge $A_i \to f_m(\piece{j})$. $P'$ then sends the reduced graph to $P$. The reduced graph corresponds to a possible outcome of \equalize(k) and it satisfies Assumption \ref{asm:newpiece}, so $P$ must know how to handle it. Hence, eventually the simulation is terminated and the reduced preference graph has a saturated matching. This matching corresponds to an \envyfree(n,M) allocation in which each \agent{i} receives a piece $X_i$, which is one of the original pieces.

For every original piece $X_i$ where $i\range{1}{N}$, define the set $Y_i = \{X_i\}\cup\{\piece{j}|f_m(\piece{j})=X_i, m\geq 2\}$. This is the set of all pieces that were mapped to $X_i$ at some point during the simulation of $P$. The sets $Y_i$ are pairwise-disjoint, since each new piece is mapped to a single original piece.

In the reduced graph, there is a preference edge $\agent{i} \to X_i$. By construction, this preference edge comes from some edge in the real graph, $\agent{i} \to \piece{j}$ where $\piece{j}\in Y_i$. Hence, for every \agent{i}, the set $Y_i$ contains some piece which is a best piece for \agent{i}.

As a final step, $P'$ asks each \agent{i} to select a best piece from the set $Y_i$. The total number of pieces ($M$) is unchanged. Hence, the resulting allocation is \envyfree(n,M).
\end{proof}

Hence, the new pieces are omitted from the preference graphs; only their total number is kept in mind for the sake of calculating the proportionality.

\subsection{The unknown-agent lemma}
\begin{lemma}
\label{lem:unknown1}
Suppose the $n$ agents are divided to $n-1$ agents whose preferences are known and one agent whose preferences are unknown. If every known agent prefers at least 2 pieces, then an envy-free allocation exists.
\end{lemma}
\begin{proof}
By Lemma \ref{lem:unknown1x}, it is sufficient to prove that every  $k$ known agents jointly prefer at least $k+1$ pieces. The proof is by induction on $k$. The base $k=1$ is given. Suppose the lemma is true for all groups of less than $k$ known agents. Consider a group of $k$ known agents $\agent{1},\dots,\agent{k}$. Each of these agents prefers at least two pieces. By Assumption \ref{asm:diff}, each of these agents was the last one to cut at least one of his two preferred pieces. Suppose that the last agent to cut one of his preferred pieces was \agent{1}. Suppose that \agent{1} prefers pieces \piece{1},\piece{2} and that he was the last agent to cut \piece{1}. 

By Assumption \ref{asm:diff} again, any other agent that prefers \piece{1} does not prefer any other piece. This means that any other known agent does not prefer \piece{1}. By the induction assumption, agents $\agent{2},\dots,\agent{k}$ jointly prefer $k$ pieces, which must be different than \piece{1}. With \piece{1}, agents $\agent{1},\dots,\agent{k}$ jointly prefer $k+1$ pieces.
\end{proof}

\section{Connected pieces and \MakeLowercase{\LARGE {$n$}} Agents}  \label{sec:n-agents}
Generalizing the 3-agent algorithm from the Section \ref{sec:tools} to $n$ agents requires the following building blocks: an \equalize(k) query for arbitrary $k$, and a generalized version of Lemma \ref{lem:unknown1}. We now describe each of these generalizations in turn.

Answering an \equalize(k) query for $k\geq 3$ is a non-trivial task. There are many different options. For example, a reply to \equalize(3) can have one of the following forms:
\begin{enumerate}
\item Cutting the best piece to three equal pieces, which are all better than the previously second-best piece; or -
\item Trimming the best piece such that it is twice as valuable as the second-best piece, then cutting the result to two halves; or -
\item Trimming both the best and the second-best pieces, such that the trimmed pieces are equal to the third-best piece.
\end{enumerate}
Naturally, the number of options grows as $k$ becomes larger.

Fortunately, \equalize(k) can be answered using a bounded number of \emph{mark} and \emph{eval} queries. We prove this by reducing \equalize(k) to the following problem:
\begin{quote}
\textbf{EnvyFreeStickDivision[$m$,$k$]}:
Given $m$ sticks of different lengths, make a minimal number of cuts such that there are at least $k$ pieces with equal lengths and no other piece is longer. 
\end{quote}
\citet{Reitzig2015Efficient} have recently presented an algorithm that solves the envy-free stick-division problem in time $O(m)$. The algorithm works as follows. First, based on the lengths of the $m$ sticks, it calculates an "optimal length", $l^*$. This is defined as the largest $l$ such that it is possible to cut at least $k$ pieces of length $l$. Then, it cuts $l^*$-sized pieces off of any stick longer than $l^*$ until all sticks
have length at most $l^*$. We use their algorithm to prove the following lemma.
\begin{lemma}
\label{lem:equalize}
When there are $m$ pieces on the table, an agent's answer to an \equalize(k) query can be calculated using $m-1$ \emph{eval} queries and $k-1$ \emph{mark} queries.
\end{lemma}
\begin{proof}
Denote the pieces currently on the table by $X_1,\dots,X_m$. Use $m-1$ \emph{eval} queries to find the agent's valuations to these pieces, $V_i(X_1),\dots,V_i(X_{m})$.\footnote{only $m-1$ \emph{eval}s are needed, since the $m$-th value can be calculated based on the other values and the additivity of $V_i$.} Create $m$ sticks, such that the length of stick $j$ is $V_i(X_j)$. Use the algorithm of \citeN{Reitzig2015Efficient} for EnvyFreeStickDivision[$m$,$k$] to find the optimal length $l^*$. Using $k-1$ \emph{mark} queries, mark $k$ pieces that the agent values as exactly $l^*$.\footnote{Only $k-1$ \emph{mark}s are needed, since for the maximum length $l^*$, at least one stick is cut evenly with no remainder.}
By definition of EnvyFreeStickDivision, the values of all other pieces are at most $l^*$, so this is a correct answer to \equalize(k).
\end{proof}

The next tool we need is a generalization of Lemma \ref{lem:unknown1}.

\begin{lemma}
\label{lem:unknownn}
Suppose the $n$ agents are divided to $n-u$ agents whose preferences are known and $u$ agents whose preferences are unknown. If every known agent prefers at least $1+2^{u-1}$ pieces, then an envy-free allocation can be attained with a bounded number of queries.
\end{lemma}
\begin{proof}
The proof is by induction on $u$. The base $u=1$ is Lemma \ref{lem:unknown1}. Assume the claim is true for $u$; we have to prove it for $u+1$ unknown agents. 

Suppose the known agents are $\agent{1},\dots,\agent{n-u-1}$ and the unknown agents are $\agent{n-u},\dots,\agent{n}$. Suppose that every known agent prefers at least $1+2^{u}$ pieces. We have to prove that an envy-free allocation can be attained with a bounded number of queries.

Ask agent \agent{n-u} to $Equalize(1+2^{u-1})$. This requires it to trim at most $2^{u-1}$ pieces and guarantees that it prefers $1+2^{u-1}$ pieces. Every known agent still prefers at least $(1+2^{u})-2^{u-1}\, =\, 1+2^{u-1}$ pieces. Hence, by adding \agent{n-u} to the set of known agents, the situation becomes exactly as in the induction assumption: there are $u$ unknown agents each of whom prefers $1+2^{u-1}$ pieces. Hence, by the induction assumption an envy-free allocation can be attained with a bounded number of queries.
\end{proof}

The proof of Lemma \ref{lem:unknownn} above immediately translates to a division algorithm (Algorithm \ref{alg:n-agents}). Initially, all agents are unknown. \agent{1} is asked to $Equalize(2^{n-2}+1)$ and becomes a known agent. Each step, another agent is asked to Equalize and becomes a known agent. Finally, \agent{n-1} is asked to \equalize(2); then, each of the first $n-1$ agents prefers at least 2 pieces, and by Lemma \ref{lem:unknown1} a saturated matching exists.\footnote{The matching can be implemented by letting the agents pick pieces in reverse order, from $n$ to $1$.}

\begin{algorithm}
\caption{Finding \envyfree (n,2^{n-1}) allocations with connected pieces.}
\label{alg:n-agents}

For $u$ = $n-1$ to $1$:

\hspace{1cm} \agent{n-u}: \equalize(2^{u-1}+1).
\end{algorithm}

Each Equalize action requires $2^{u-1}$ cuts. Hence the total number of cuts required is:
\begin{align*}
\sum_{u=1}^{n-1}{2^{u-1}} = 2^{n-1}-1
\end{align*}
and the total number of pieces after the last cut is $2^{n-1}$. By the EFP Lemma, the division is envy-free and each agent receives a connected pieces with value at least: 
\begin{align*}
\frac{1}{2^{n-1}}
\end{align*}
The algorithm uses $n$ Equalize queries, each of which can be calculated using $O(2^n)$ mark-eval queries. This proves our Theorem \ref{thm:n-agents}.

\begin{remark}
The algorithm presented above is similar to an algorithm mentioned by \citet{Brams1996Fair} (chapter 7, page 135)
as a sub-routine of their unbounded algorithm for envy-free cake-cutting with disconnected pieces. However, their sub-routine does not use the generalized Equalize query and hence does not guarantee any positive proportionality with connected pieces.
\end{remark}

\section{Connected pieces and 3 Agents}  \label{sec:3-agents}
Our goal in this and the next section is to improve the proportionality from the exponential figure guaranteed by the algorithm of Section \ref{sec:n-agents}. In this section we focus on the case of 3 agents. We first note that any algorithm starting with a pre-specified agent cutting 3 equal pieces cannot guarantee a proportionality of more than 1/4 with connected pieces (since the values of these pieces in the eyes of the other two agents might be 1/2, 1/4 and 1/4). However, when the cutting agent can be selected according to preferences, the optimal proportionality --- 1/3 --- is attainable. This can be done by Algorithm \ref{alg:3-agents}.

\begin{algorithm}
\caption{Finding \envyfree (3,3) allocations --- envy-free and proportional for 3 agents with connected pieces.}
\label{alg:3-agents}

\dirtree{
.1 One of:.
.2 Alice:\equalize(3) .
.2 Alice:\equalize(3); Bob:\equalize(2) .
.2 Alice:\equalize(3); Carl:\equalize(2) .
.2 Bob:\equalize(3) .
.2 Bob:\equalize(3);   Alice:\equalize(2) .
.2 Bob:\equalize(3);   Carl:\equalize(2) .
.2 Carl:\equalize(3) .
.2 Carl:\equalize(3); Alice:\equalize(2) .
.2 Carl:\equalize(3); Bob:\equalize(2) .
}
\end{algorithm}

The "One of" statement means that the algorithm should try each of the 9 execution branches on paper, and check whether it "succeeds" (i.e, leads to an envy-free and proportional division). Whenever an execution branch succeeds, the algorithm stops and implements the resulting allocation on the real cake. We now prove that at least one branches indeed succeeds.

\begin{lemma}
For every preferences of the agents, there is at least one branch of Algorithm \ref{alg:3-agents} in which the resulting allocation is $\envyfree (3,3)$.
\end{lemma}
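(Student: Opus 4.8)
The plan is to show that for an arbitrary valuation profile at least one of the nine execution branches of Algorithm~\ref{alg:3-agents} yields an $\envyfree(3,3)$ allocation. Envy-freeness will be the easy half of every branch: after the first query (a designated agent $X$ doing \equalize(3)) the cutter $X$ prefers all three pieces, and after the optional second query (a second agent $Y$ doing \equalize(2)) that agent prefers exactly two pieces, one of them being the one it trimmed. So in a trim branch the cutter and the trimmer are two ``known'' agents each preferring at least two pieces and jointly preferring at least three, and Lemma~\ref{lemma:unknown1} (with the remaining agent as the unknown one) already guarantees a saturated matching, hence an envy-free allocation; in a no-trim branch the matching is exhibited directly. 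Thus the real content is the proportionality guarantee of $1/3$ --- and note that for a branch using a trim the EFP Lemma only certifies $1/4$ (four pieces), so the $1/3$ bound must be argued by hand there.

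First I would dispose of the easy case: if there is some agent $X$ such that after $X$ does \equalize(3) the other two agents prefer two \emph{distinct} pieces, then the branch ``$X$: \equalize(3)'' succeeds --- the obvious matching gives $X$ the remaining piece (worth exactly $1/3$ to $X$) and gives each of the others its preferred piece, which is worth at least $1/3$ to it since those three pieces partition the whole cake (the EFP Lemma with $M=3$). So from now on assume the \emph{hard case}: for \emph{every} choice of first cutter $X$, the two other agents both prefer one and the same piece $P^X$ of $X$'s equipartition, and hence $V_Y(P^X)>1/3$ for each of the two ``dependents'' $Y$ of $X$.

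Now fix a cutter $X$ with dependents $Y,Z$ and consider the branch in which $Y$ trims $P^X$. By Assumption~\ref{asm:diff} the cutter $X$ no longer prefers the trimmed piece, so $X$ receives one of the two untrimmed pieces, worth exactly $1/3$ to $X$; the cutter is always fine. The trimmer $Y$ receives one of its two equal-value best pieces; call that value $\beta$. The bystander $Z$ receives a preferred on-table piece; call that value $\gamma$. So this branch is $\envyfree(3,3)$ exactly when $\beta\ge1/3$ and $\gamma\ge1/3$. Using the description of \equalize(2) (type~1: halve the best piece; type~2: cut the best piece down to the value of the second-best) one checks that $\beta\ge1/3$ iff either $V_Y(P^X)\ge2/3$ ($Y$ ``dominates'' $P^X$) or $Y$ has a second piece of $X$'s equipartition worth at least $1/3$ to it (a ``good alternative''); moreover if $Z$ has a good alternative under $X$, that piece is untrimmed and still on the table, so $\gamma\ge1/3$. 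Consequently, \emph{if both dependents of $X$ have good alternatives under $X$, then both trim branches for cutter $X$ succeed.}

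The main obstacle is to rule out the remaining configurations: those in which, for every cutter $X$, at least one dependent has its valuation ``concentrated'' on $P^X$ (no good alternative, so both of $X$'s other equipartition pieces are worth less than $1/3$ to that dependent), and neither choice of trimmer rescues the bystander via the ``$\beta\ge1/3$'' route, the ``dominates'' route, or the route $V_Z(\text{trimmed }P^X)\ge1/3$. I would enumerate these configurations by recording, for each of the three agents, which of the contested pieces $P^A,P^B,P^C$ its valuation is concentrated on --- these three pieces are necessarily distinct, since $V_X(P^X)=1/3$ while $V_X(P^{X'})>1/3$ for $X'\ne X$ --- and then derive a numerical contradiction from the facts that each $V_i$ is a probability measure, that $V_A(P^A)=V_B(P^B)=V_C(P^C)=1/3$, and that the ``concentrated'' pieces of distinct agents must overlap substantially. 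I expect this final step to be a short but fiddly finite case check --- in particular one must cope with adversarial tie-breaking and with an adversarial choice of how much mass the trimmer removes --- and it is the part I would be most careful about; everything else is routine.
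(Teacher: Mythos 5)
Your scaffolding is sound (envy-freeness via Lemma \ref{lemma:unknown1}, the easy case where the two non-cutters prefer distinct pieces, and the exact characterization of when the trimmer keeps value $1/3$), but the proof stops precisely where the lemma's real content begins. The surviving configurations are not actually eliminated; you only conjecture that a ``short but fiddly finite case check'' will yield a ``numerical contradiction,'' and that unperformed step is the heart of the matter. Worse, the route you propose is unlikely to close it: the bad-looking configurations are \emph{not} numerically impossible. For instance, with the agents ordered by their equipartition marks as in the paper's case analysis, it can genuinely happen that after the right agent does \equalize(3) both other agents prefer the same piece while the bystander has no other untrimmed piece worth $1/3$ and nobody ``dominates.'' The paper does not derive a contradiction there; it shows the branch still succeeds by a positional argument: the trimmer cannot cut past a certain mark, because the piece she keeps must remain at least as valuable to her as her other preferred piece, and therefore the trimmed interval still \emph{contains} an interval worth exactly $1/3$ to the bystander (e.g.\ in the paper's B--C--A case, ``she must have trimmed at or to the left of the C mark, since she values $\agentspiece{B}{1}$ more than $\agentspiece{C}{3}$''). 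This uses the interleaving of the three agents' marks and value comparisons across different partitions --- exactly the information your model discards when you treat the amount of the bystander's mass removed by the trim as adversarial. In the adversarial-trim abstraction the bystander's value of the trimmed piece really can fall below $1/3$, so no contradiction is available at that level.

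A second, related omission: your framework never says how the first cutter is to be chosen in the hard case, whereas in the paper this choice is dictated by the order of the marks (the agent whose cuts are ``in the middle'' in the relevant sense), and in the B--C--A interleaving it additionally depends on whether Alice prefers $\agentspiece{B}{1}$ or $\agentspiece{C}{3}$. To finish along your lines you would have to reintroduce the agents' mark positions, order them, and argue case by case about where the \equalize(2) cut can land --- which is essentially the paper's six-case geometric proof, not a value-matrix contradiction.
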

\begin{proof}
It is convenient to normalize the valuations such that each agent values the entire cake as 3. Hence, proportionality requires that each agent receives a value of at least 1. 

Note that in each branch, the agent that does the initial \equalize(3) always has at least one whole piece to choose, so he always feels no envy and has a value of at least 1. It remains to prove that the same is true for the other two agents. I.e, in at least one branch, there is an envy-free allocation and for every agent there is a piece worth at least 1.

Assume, for the sake of the proof, that each agent marks two points in the interval $[0,1]$ that partition it to three intervals equal in his eyes. Denote the equal pieces of agent X by: \agentspiece{X}{1}, \agentspiece{X}{2} and \agentspiece{X}{3}, such that the value of \agentspiece{X}{i} to agent X is exactly 1.

Assume w.l.o.g. that the order of the first lines is A-B-C. Hence: $\agentspiece{A}{1}\subseteq \agentspiece{B}{1}\subseteq \agentspiece{C}{1}$.  There are $3!=6$ options for the order of the second lines.\footnote{We ignore the fortunate case in which two or more agents make a mark in the exact same spot. This case can be handled by assuming an arbitrary order between these agents.} We treat each of these cases in turn. Each case is illustrated by a picture; the vertical gray lines in the pictures are the cuts made by the agent who does the "\equalize(3)" in the successful branch.

\psset{unit=1mm}

\subsection{C-B-A}

\begin{center}
\begin{pspicture}(80,15)
\agentspieceframe{C}{1}{0}{30}
\agentspieceframe{C}{2}{30}{50}
\agentspieceframe{C}{3}{50}{80}
\agentcuts{C}{B}{A}
\end{pspicture}
\end{center}
Ask Carl to \equalize(3) by cutting the cake at the points marked by "C" in the above picture (the vertical gray lines). Both Alice and Bob value two pieces ---  \agentspiece{C}{1} and \agentspiece{C}{3} --- as at least 1. This can be easily seen in the picture. E.g, 
the fact that Alice's leftmost mark is inside \agentspiece{C}{1} means that Alice values a subset of  \agentspiece{C}{1} as exactly 1, so she values \agentspiece{C}{1} as at least 1. The same is true for piece \agentspiece{C}{3} and for Bob.

Ask either Alice or Bob to \equalize(2). At most one piece is trimmed, so for each agent, at least one remaining piece has value at least 1. Moreover, both the cutter and Carl have two preferred pieces, so by Lemma \ref{lem:unknown1} an envy-free allocation exists. So the branch "Carl:\equalize(3); Alice:\equalize(2)" succeeds (the branch "Carl:\equalize(3); Bob:\equalize(2)" also succeeds, but we only need one successful branch).

\subsection{C-A-B}
\begin{center}
\begin{pspicture}(80,15)
\agentspieceframe{C}{1}{0}{30}
\agentspieceframe{C}{2}{30}{50}
\agentspieceframe{C}{3}{50}{80}
\agentcuts{C}{A}{B}
\end{pspicture}
\end{center}
The analysis of the case C-B-A applies as-is to this case.

\subsection{A-B-C}

\begin{center}
\begin{pspicture}(80,15)
\agentspieceframe{B}{1}{0}{20}
\agentspieceframe{B}{2}{20}{60}
\agentspieceframe{B}{3}{60}{80}
\agentcuts{A}{B}{C}
\end{pspicture}
\end{center}
Ask Bob to \equalize(3). If Alice and Carl prefer different pieces, then we are done --- an envy-free allocation exists with the current 3 pieces, so by the EFP lemma the proportionality is 1/3. The branch "Bob:\equalize(3)" succeeds.

Otherwise, Alice and Carl prefer the same piece. This piece must be \agentspiece{B}{2}, since \agentspiece{B}{1} is worth less than 1 for Carl and \agentspiece{B}{3} is worth less than 1 for Alice. This means that \agentspiece{B}{2} is worth more than 1 for both Alice and Carl. Hence, each of them has two pieces worth at least 1: \agentspiece{B}{1} and \agentspiece{B}{2} for Alice, \agentspiece{B}{2} and \agentspiece{B}{3} for Carl. This is the same situation as in the case C-B-A. An \equalize(2) by either Alice or Carl guarantees an envy-free and proportional division. So the branch "Bob:\equalize(3); Alice:\equalize(2)" succeeds.

\subsection{B-A-C}

\begin{center}
\begin{pspicture}(80,15)
\agentspieceframe{B}{1}{0}{20}
\agentspieceframe{B}{2}{20}{50}
\agentspieceframe{B}{3}{50}{80}
\agentcuts{B}{A}{C}
\end{pspicture}
\end{center}
Ask Bob to \equalize(3). Alice values two pieces ---  \agentspiece{B}{1} and \agentspiece{B}{3} --- as at least 1. Ask Alice to \equalize(2). Alice still has two pieces with a value of at least 1. As for Carl, there are two cases: If Alice trimmed \agentspiece{B}{1}, then \agentspiece{B}{3} remains untouched; its value for Carl is more than 1. If Alice trimmed \agentspiece{B}{3}, then she must have trimmed at or to the left of the second A (since its value for her must be at least the value of \agentspiece{B}{1}, which is at least 1):
\begin{center}
	\begin{pspicture}(80,15)
	\pieceframe{Alice}{0}{20}
	\pieceframe{Bob}{20}{50}
	\pieceframe{Carl}{60}{80}
	\agentcuts{B}{A}{C}
	\end{pspicture}
\end{center}
Hence, the value of the trimmed piece for Carl is still at least 1.  So the branch "Bob:\equalize(3); Alice:\equalize(2)" succeeds.

\subsection{A-C-B}

\begin{center}
\begin{pspicture}(80,15)
\agentspieceframe{C}{1}{0}{30}
\agentspieceframe{C}{2}{30}{60}
\agentspieceframe{C}{3}{60}{80}
\agentcuts{A}{C}{B}
\end{pspicture}
\end{center}
The previous case, A-B-C-B-A-C, is symmetric to A-B-C-A-C-B. This can be seen by renaming the agents from A-B-C to B-C-A and reversing the order of lines. The branch "Carl:\equalize(3); Bob:\equalize(2)" succeeds.

\subsection{B-C-A}
The last sub-case is handled according to Alice's preferences --- whether she prefers \agentspiece{B}{1} (which contains \agentspiece{A}{1}) or \agentspiece{C}{3} (which contains \agentspiece{A}{3}). Note that Alice values both these pieces as at least 1. 

\textbf{If Alice prefers \agentspiece{B}{1}}, then ask Bob to \equalize(3). Alice values two pieces as at least 1. Ask her to \equalize(2).  For Carl there are two cases: If Alice trimmed \agentspiece{B}{1}, then \agentspiece{B}{3} remains untouched; its value for Carl is more than 1. If Alice trimmed \agentspiece{B}{3}, then she must have trimmed it at or to the left of the C mark, since she values \agentspiece{B}{1} more than \agentspiece{C}{3}:
\begin{center}
	\begin{pspicture}(80,15)
	\pieceframe{Alice}{0}{20}
	\pieceframe{Bob}{20}{50}
	\pieceframe{Carl}{60}{80}
	\agentcuts{B}{C}{A}
	\end{pspicture}
\end{center}
Hence, the value of the trimmed piece for Carl is still at least 1.  The branch "Bob:\equalize(3); Alice:\equalize(2)" succeeds.

\textbf{If Alice prefers \agentspiece{C}{3}}, then ask Carl to \equalize(3). Alice still values two pieces as at least 1. Ask her to \equalize(2). For Bob there are two cases: If Alice trimmed \agentspiece{C}{3}, then \agentspiece{C}{1} remains untouched; its value for Bob is more than 1. If Alice trimmed \agentspiece{C}{1}, then she must have trimmed it at or to the right of the B mark, since she values \agentspiece{C}{3} more than \agentspiece{B}{1}. Hence, the value of the trimmed piece for Bob is still at least 1. The branch "Carl:\equalize(3); Alice:\equalize(2)" succeeds.
\begin{center}
\begin{pspicture}(80,15)
\pieceframe{Bob}{0}{20}
\pieceframe{Carl}{30}{60}
\pieceframe{Alice}{60}{80}
\agentcuts{B}{C}{A}
\end{pspicture}
\end{center}
This completes the correctness proof of the 3-agents division algorithm.\footnote{Note that the proof did not use all 9 branches in the algorithm. This is because the proof arbitrarily named the agents A, B and C according to the order of their leftmost division line. If the agents' names are given, each of the 9 branches may be required.}
\end{proof}
We now count the number of mark-eval queries required by Algorithm \ref{alg:3-agents} in the worst case. In order to try all 9 execution branches, the algorithm has to do three different Equalize(3) statements, each of which requires 2 \emph{mark}s, and six Equalize(2) statements, each of which requires 2 \emph{eval}s (since there are three pieces on the table) and 1 \emph{mark}. To check whether a branch succeeds (and implement the envy-free allocation if it exists), the algorithm should ask the agent/s who have not touched the cake so far to evaluate the pieces. This requires 4 \emph{eval}s in the three short branches and 3 \emph{eval}s in the six long ones. All in all, at most $3*2+6*3+3*4+6*3=54$ queries are required. This completes the proof of our Theorem \ref{thm:3-agents}.

\section{Connected pieces and 4 agents}  \label{sec:4-agents}

\psset{colsep=0.5cm,rowsep=1cm}

Encouraged by the performance of the algorithm of Section \ref{sec:3-agents}, we would like to extend it to produce a connected envy-free and proportional allocation for $n$ agents. Unfortunately, the number of different cases becomes prohibitively large even for $n=4$ agents. The equal partition of each agent is made by 3 parallel marks, so if we name the agents according to their 1st mark, the number of options for the following two marks is $(4!)^2=576$, and in general $(n!)^{n-2}$. The algorithm for each specific case may be short, but writing down all the different cases takes too long to be practical.

\begin{algorithm}
\caption{Finding allocations which are both \envyfreevip(4,4) and \envyfree (4,7) with connected pieces.}
\label{alg:4-agents}

\dirtree{
.1 Alice:\equalize(4) .
.1 One of: .
.2 Bob:\equalize(2); Carl:\equalize(2) .
.2 Bob:\equalize(3); Carl:\equalize(2) .
.2 Carl:\equalize(2); Bob:\equalize(2) .
.2 Carl:\equalize(3); Bob:\equalize(2) .
}
\end{algorithm}

Therefore we walk in a different direction and present Algorithm \ref{alg:4-agents} for 4 agents, which we call Alice, Bob, Carl and Dana. The main guarantee of this algorithm is:

\begin{lemma}\label{lem:4-agents}
For every preferences of the agents, there is at least one branch of Algorithm \ref{alg:4-agents} in which, in the resulting preference graph, each of Alice Bob and Carl prefers at least 2 pieces.
\end{lemma}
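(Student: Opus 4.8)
The plan is to analyze Algorithm~\ref{alg:4-agents} by first having Alice do \equalize(4), producing four pieces \piece{1},\piece{2},\piece{3},\piece{4} that are all equally-best for Alice. By the corollaries of Assumption~\ref{asm:diff}, each of Bob, Carl, and Dana then prefers exactly one of these four pieces. I would split into cases according to how the three remaining agents' favorite pieces are distributed among the four pieces. The key observation is that by the pigeonhole principle, if Bob and Carl happen to prefer distinct pieces, we are in good shape, whereas if several of them crowd onto a single piece we must use the Equalize queries in the ''One of'' block to spread the preferences out. Throughout, Alice herself always prefers at least $2$ pieces initially (she prefers all four), but we must be careful: subsequent trims by Bob or Carl may destroy some of Alice's edges. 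Since at most two further \equalize queries are issued, and each \equalize(k) creates $k-1$ trimmed pieces, Alice loses at most a bounded number of edges; I would track this explicitly to confirm Alice still prefers $\geq 2$ pieces at the end. Note Dana's preferences are allowed to be arbitrary/unknown in the conclusion of the lemma --- we only need Alice, Bob, Carl to each prefer $\geq 2$ pieces.

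The main case analysis would go roughly as follows. \textbf{Case 1: Bob and Carl prefer different pieces of Alice's four.} Then in the branch ''Bob:\equalize(2); Carl:\equalize(2)'' we ask Bob to \equalize(2), giving Bob two preferred pieces (his original favorite, trimmed, plus his second-best). Then Carl does \equalize(2), giving Carl two preferred pieces. I would check that Bob's two preferred pieces survive Carl's trim: since Bob and Carl started on different pieces, Carl trims his own favorite (and possibly a second piece); the concern is whether Carl's trim hits one of Bob's two pieces. Here one uses the freedom to choose which of the four symmetric-looking branches to run, together with Assumption~\ref{asm:newpiece} (new pieces are nobody's favorite) to argue a safe branch exists. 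For Alice: she prefers all four pieces initially; Bob's \equalize(2) trims at most one piece, Carl's trims at most one, so Alice still prefers at least $4-2=2$ pieces. \textbf{Case 2: Bob and Carl prefer the same piece, say \piece{1}.} Then \piece{1} is worth strictly more than $1/4$ to both in the normalization where Alice's pieces are each worth exactly the average. Use the branch ''Bob:\equalize(3); Carl:\equalize(2)'': Bob \equalize(3) splits \piece{1} (or trims it against his second- and third-best) so Bob has three preferred pieces; crucially this leaves Carl's valuation of the remnant of \piece{1} still high enough, or Carl retains an untrimmed alternative, so that after Carl:\equalize(2) Carl has two preferred pieces and Bob still has $\geq 2$ (he started with $3$, loses at most $1$ to Carl's trim). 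Alice: she started with $4$, loses at most $2$ trimmed by Bob's \equalize(3), loses at most $1$ more to Carl --- but $4-3=1$ is not enough, so here I must argue more carefully that Bob's \equalize(3) when splitting \piece{1} into pieces only consumes \piece{1} itself (one of Alice's pieces) rather than trimming two of Alice's other pieces; that is exactly the content of the ''split favorite into three'' option and it costs Alice only one piece, leaving $4-1-1=2$. The symmetric sub-branches with Carl cutting first handle the case where Bob's behavior would damage Carl instead.

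The hard part will be the bookkeeping in Case~2 and its sub-cases: showing that one of the four branches in the ''One of'' block always leaves \emph{all three} of Alice, Bob, Carl with $\geq 2$ preferred pieces simultaneously, given that the trims interact. The delicate point is that when an agent does \equalize(2) or \equalize(3) by \emph{trimming} a shared favorite rather than \emph{cutting} it, the trim may or may not fall on a part valuable to the other agent, and we must pick the branch (Bob-first vs.\ Carl-first, \equalize(2) vs.\ \equalize(3)) that avoids the bad interaction. I would organize this by observing that the ''later cutter'' of any shared piece is determined, invoke Assumption~\ref{asm:diff} to rule out spurious extra edges, and use a short exhaustive check over the (few) remaining configurations --- which is precisely why the paper defers the fully detailed version to the computer-generated proof in Appendix~\ref{sec:proof-4-agents}. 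In the main text I would present Case~1 in full and one representative sub-case of Case~2, then assert that the remaining sub-cases follow by symmetry (renaming Bob$\leftrightarrow$Carl swaps the two relevant branches) and the same argument, referencing the appendix for the complete enumeration.
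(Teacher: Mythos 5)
There is a genuine gap: your case split (Bob and Carl share a favorite vs.\ not) does not capture where the difficulty lies, and both of your key claims fail on concrete profiles. Against your Case~1 claim that when the favorites differ some \equalize(2)-first branch is safe: take Bob's order $\piece{1}\preceq\piece{2}\preceq\piece{3}\preceq\piece{4}$ and Carl's order $\piece{1}\preceq\piece{2}\preceq\piece{4}\preceq\piece{3}$ (different favorites). In "Bob:\equalize(2); Carl:\equalize(2)", Bob trims $\piece{4}$, Carl's best is still $\piece{3}$ and he trims it, leaving Bob with the single preferred piece $\piececut{4}{B}$; in "Carl:\equalize(2); Bob:\equalize(2)", Carl trims $\piece{3}$ to match $\piece{4}$, Bob then trims $\piece{4}$, leaving Carl with only $\piececut{3}{C}$. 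So neither order of the \equalize(2) branches works, and Assumptions~\ref{asm:diff}--\ref{asm:newpiece} do not rescue this; the \equalize(3) branches are genuinely needed even with distinct favorites. Against your Case~2 accounting: the reply to \equalize(3) is dictated by Bob's valuation, not chosen by the algorithm, and in general it trims Bob's two best pieces (producing $\piececut{3}{BB}$ and $\piececut{4}{BB}$), so you cannot assume the "split the shared favorite into three" form that costs Alice only one piece; your bound "Alice keeps $4-1-1=2$" is therefore unavailable exactly in the cases where you need it.

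More fundamentally, your proposal lacks the mechanism the paper's proof actually runs on: a \emph{cross-branch comparison}. When a branch fails, the failure statement concerns an agent's value for a trimmed piece (e.g.\ for Carl, $\piece{1}\preceq\piececut{4}{B}\preceq\piece{4}$), and since all agents trim from the same end this forces a physical containment between trims made in \emph{different} branches (e.g.\ $\piececut{4}{C}\subseteq\piececut{4}{B}$), hence a preference relation valid for \emph{all} agents; this is what pins down Bob's behavior in the Carl-first branch and shows it succeeds, and the analogous relations $\piececut{3}{BB}$ vs.\ $\piececut{3}{CC}$ and $\piececut{4}{BB}$ vs.\ $\piececut{4}{CC}$ are what make one of the two \equalize(3) branches succeed in the remaining hard cases (those where Bob's and Carl's top two pieces both lie in $\{\piece{3},\piece{4}\}$). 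Without this idea, the unknowns created by a trim cannot be resolved by re-ordering who cuts first, and the case enumeration you defer to the appendix would not close. The paper's proof is organized as an exhaustive check over Carl's $24$ orderings with exactly this counterfactual-containment argument at its core, not around a same-favorite/different-favorite dichotomy.
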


By Lemma \ref{lem:unknown1}, this implies that 
Algorithm \ref{alg:4-agents} finds an envy-free allocation regardless of Dana's preferences. Since at least one of Alice's original pieces remains untouched, the resulting allocation is also \envyfreevip(4,4) (Alice is the VIP). The total number of cuts in each branch is at most 6 so the total number of pieces is at most 7. Hence the resulting allocation is \envyfree(4,7).

We now count the number of mark-eval queries required. The first Equalize(4) requires 3 \emph{mark}s. The second Equalize requires 3 \emph{eval}s (since there are 4 pieces on the table) and 1 or 2 \emph{marks}; the last Equalize requires 4 or 5 \emph{eval}s (depending on the number of pieces on the table) and 1 \emph{mark}. Checking whether a branch succeeds requires asking Dana 5 or 6 \emph{eval}s. All in all, the number of queries required in all paths is $3+(3+1+4+1+5)+(3+2+5+1+6)+(3+1+4+1+5)+(3+2+5+1+6)=65$, as claimed in our Theorem \ref{thm:4-agents}.

\begin{proof}[of Lemma \ref{lem:4-agents}]
After Alice:\equalize(4), there are four pieces on the table. We rename the pieces, if needed, such that Bob's preference ordering on these pieces is: $\piece{1} \preceq \piece{2} \preceq \piece{3} \preceq \piece{4}$. Now there are $4!=24$ possible preference orderings for Carl. Since checking 24 cases is a tedious task, we wrote a program in SageMath \citep{sage} to do it. Our program generates a textual proof that can be read and verified independently of the program itself (i.e, it is not required to believe that the program is bug-free in order to verify the proof). The entire proof is given in Appendix \ref{sec:proof-4-agents}. Below, we explain the typical cases in detail.

Mark piece \piece{i} after agent $X$ cuts it during \equalize(2) by $\piece{i}_X$ and during \equalize(3) by $\piece{i}_{XX}$. The following graphs show the preferences of Alice and Bob after Bob does \equalize(2) (left) or \equalize(3) (right):
\begin{center}
\begin{psmatrix}[colsep=2cm,rowsep=0mm]
	\begin{psmatrix}[colsep=0cm,rowsep=0mm]
		\fouragents{}
        \piecenode{1} & \piecenode{2} & \piecenode{3} & \piecenodecut{4}{B}
        
		\like{A}{1}\like{A}{2}\like{A}{3}
		\like{B}{3}\like{B}{4B}
	\end{psmatrix}
& 
	\begin{psmatrix}[colsep=0cm,rowsep=0mm]
		\fouragents{}
        \piecenode{1} & \piecenode{2} & \piecenodecut{3}{BB} & \piecenodecut{4}{BB}
	
		\like{A}{1}\like{A}{2}
		\like{B}{2}\like{B}{3BB}\like{B}{4BB}
	\end{psmatrix}
\end{psmatrix}
\end{center}
Of the 24 possible preference orders of Carl, there are 6 cases in which the best piece of Carl is \piece{1}. This obviously remains Carl's best piece after Bob trims some other pieces. This means that when Carl does \equalize(2), piece \piece{1} is trimmed. Looking at the left graph, we see that afterwards, each of Alice Bob and Carl prefers at least two pieces. This means that the branch starting with "Bob:\equalize(2)" succeeds. The same is true for the 6 cases in which Carl's best piece is \piece{2}; we have already covered 12 out of 24 cases.

Next, consider the four cases in which Carl's best piece is \piece{3} and Carl's second-best piece is \piece{1} or \piece{2}. Then, after Bob:\equalize(2), Carl's best piece is \piece{3} and trimming it leaves only one best piece for Bob, so the branch starting with "Bob:\equalize(2)" fails. On the other hand, after Carl:\equalize(2), Bob's best piece is still \piece{4} and trimming it leaves two best pieces for Alice and Carl, so the branch starting with "Carl:\equalize(2)" succeeds. So far, 16 of 24 cases are covered.

Next, consider the two cases in which Carl's best piece is \piece{4} and second-best piece is \piece{1}. After Bob:\equalize(2), \piececut{4}{B} may or may not be Carl's best piece:
\begin{center}
\begin{psmatrix}[colsep=2cm,rowsep=0mm]
	\begin{psmatrix}[colsep=0cm,rowsep=0mm]
		\fouragents{}
        \piecenode{1} & \piecenode{2} & \piecenode{3} & \piecenodecut{4}{B}
	        
		\like{A}{1}\like{A}{2}\like{A}{3}
		\like{B}{3}\like{B}{4B}
		\like{C}{1}
	\end{psmatrix}
&
	\begin{psmatrix}[colsep=0cm,rowsep=0mm]
		\fouragents{}
        \piecenode{1} & \piecenode{2} & \piecenode{3} & \piecenodecut{4}{B}
	        
		\like{A}{1}\like{A}{2}\like{A}{3}
		\like{B}{3}\like{B}{4B}
		\like{C}{4B}
	\end{psmatrix}
\end{psmatrix}
\end{center}
The easy case is that for Carl, $\piececut{4}{B}\preceq\piece{1}\preceq\piece{4}$ (left); then, Carl:\equalize(2) trims \piece{1} and leaves two best pieces for Alice and Bob, and the branch starting with "Bob:\equalize(2)" succeeds. The hard case is  $\piece{1}\preceq\piececut{4}{B}\preceq\piece{4}$ (right); then, Carl:\equalize(2) trims \piececut{4}{B} and leaves only one best piece for Bob, and the branch starting with "Bob:\equalize(2)" fails. But now, consider the branch starting with "Carl:\equalize(2)". Carl trims piece \piece{4} to make it equal to \piece{1}. But because $\piece{1}\preceq\piececut{4}{B}\preceq\piece{4}$, Carl must trim \piece{4} to a \emph{shorter} length than \piececut{4}{B} (we assume that all agents trim the pieces from the same direction), so $ \piececut{4}{C} \subseteq \piececut{4}{B}$. This means that, if the branch "Bob:\equalize(2)" fails, the following preference relation must be true globally (for \emph{all} agents):

$$\piececut{4}{C} \preceq \piececut{4}{B}$$

In particular, it must be true for Bob. But by definition, for Bob, $\piececut{4}{B}$ and $\piece{3}$ are equal. Hence, for Bob: $\piececut{4}{C} \preceq \piece{3}$. 
This means that, after the initial Carl:\equalize(2), Bob's best piece is \piece{3}:
\begin{center}
\begin{psmatrix}[colsep=2cm,rowsep=0mm]
	\begin{psmatrix}[colsep=0cm,rowsep=0mm]
		\fouragents{}
        \piecenode{1} & \piecenode{2} & \piecenode{3} & \piecenodecut{4}{C}
	        
		\like{A}{1}\like{A}{2}\like{A}{3}
		\like{B}{3}
		\like{C}{1}\like{C}{4C}
	\end{psmatrix}
\end{psmatrix}
\end{center}
Now, Bob:\equalize(2) leaves two best pieces for both Alice and Carl, so the branch starting with "Carl:\equalize(2)" succeeds.

An almost identical argument applies in the two cases in which Carl's best piece is \piece{4} and second-best piece is \piece{2}, so we already covered 20 of 24 cases; in each of these cases, either the branch starting with "B:\equalize (2)" or the branch starting with "C:\equalize (2)" succeeds.

The other two branches are used in the remaining four cases: in each of these cases, either the branch starting with "B:\equalize(3)" or the branch starting with "C:\equalize(3)" must succeed. The proof uses similar arguments to the one explained above: if the branch starting with "B:\equalize(3)" fails, then some global containment relations are implied between \piececut{4}{BB} and \piececut{4}{CC} and between \piececut{3}{BB} and \piececut{3}{CC}. These relations imply that the branch starting with "C:\equalize(3)" must succeed. Although the proof is longer, the principle is the same so we leave the details to the printout in Appendix \ref{sec:proof-4-agents}.
\end{proof}

Lemma \ref{lem:4-agents} lets us improve the algorithm of Section \ref{sec:n-agents}. We note that the core of that algorithm is Lemma \ref{lem:unknownn}, which is exponential in nature --- it requires that every known agent prefers $1+2^{u-1}$ pieces whenever there are $u$ known agents. We would like to reduce this figure to $1+u$. Since for $u\in\{1,2\}$ these two expressions are equal, we focus on the case $u=3$:

\begin{lemma}
\label{lem:unknown3}
Suppose the $n$ agents are divided to $n-3$ agents whose preferences are known and 3 agent whose preferences are unknown. If every known agent prefers at least 4 pieces, then an envy-free allocation exists.
\end{lemma}
\begin{proof}
Call the three unknown agents Bob, Carl and Dana. Apply Algorithm \ref{alg:4-agents} without the first step "Alice:\equalize(4)". The assumption of the lemma implies that every known agent is in the same situation as Alice after the first step. This means that after the algorithm completes, all agents except the last unknown agent prefer two pieces. By Lemma \ref{lem:unknown1}, an envy-free allocation exists.
\end{proof}

Lemma \ref{lem:unknown3} can be plugged as a base case into Lemma \ref{lem:unknownn} to get the following improved lemma:
\begin{lemma}
\label{lem:unknownn3}
Suppose the $n$ agents are divided to $n-u$ agents whose preferences are known and $u$ agents whose preferences are unknown, where $u\geq 3$. If every known agent prefers at least $1+3\cdot 2^{u-3}$ pieces, then an envy-free allocation can be attained with a bounded number of queries.
\end{lemma}
The proportionality of the division algorithm of Section \ref{sec:n-agents} improves to $1/[\frac{3}{4}\cdot 2^{n-1}+1]$.


\begin{remark}
The symmetry of Algorithm \ref{alg:4-agents} hints that it may be generalizable to 5 or more agents. In particular, we thought that an algorithm such as the following might work:

\begin{framed}
Alice:\equalize(5)
\dirtree{
	.1 One of: .
	.2 Bob:\equalize(2); Carl:\equalize(2); Dana:\equalize(2) .
	.2 Bob:\equalize(2); Carl:\equalize(3); Dana:\equalize(2) .
	.2 Bob:\equalize(3); Carl:\equalize(2); Dana:\equalize(2) .
	.2 Bob:\equalize(3); Carl:\equalize(3); Dana:\equalize(2) .
	.2 Bob:\equalize(4); Carl:\equalize(2); Dana:\equalize(2) .
	.2 Bob:\equalize(4); Carl:\equalize(3); Dana:\equalize(2) .
	.2 [and similarly for the other 5 permutations of Bob, Carl, Dana] .
}
\end{framed}

We checked this possibility using our SageMath program, but found a specific combination of preferences in which all these branches fail to produce a saturated matching.\footnote{One such case is when Bob's preference order is $1\preceq 2\preceq 3\preceq 4\preceq 5$, Carl's order is also $1\preceq 2\preceq 3\preceq 4\preceq 5$ while Dana's order is $1\preceq 3\preceq 2\preceq 4\preceq 5$. See http://github.com/erelsgl/envy-free for the source code and proof.}
\end{remark}

\section{Disconnected pieces and 4 agents} \label{sec:disconnected-4}
In this and the following section, we use our results from the connected case to prove better proportionality bounds in the disconnected case. We show that, if the pieces may be disconnected, we can have an envy-free division in which the value of each agent is arbitrarily close to $1/n$, in bounded time. This is done using two general reduction lemmas which rely on envy-free-VIP algorithms. 

\begin{algorithm}
\caption{Finding \envyfree (4,4) allocations with disconnected pieces.}
\label{alg:4-agents-disc}

Let $C' = C$.

For $i$ = 1 to 4:

\hspace{1cm} Rename \agent{i} to "Alice"; 

\hspace{1cm} Divide $C'$ using Algorithm \ref{alg:4-agents};

\hspace{1cm} Let $C' = $ the subset that  remained unallocated.
\end{algorithm}

\begin{lemma} (Weak Reduction Lemma)
For every $n$ and $M\geq n$,

If there is an algorithm for finding $\envyfreevip(n,M)$ allocations using $T(n)$ queries,

then there is an algorithm for finding $\envyfree(n,M)$ allocations using $n\cdot T(n)$ queries.
\end{lemma}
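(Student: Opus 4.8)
The plan is to iterate the VIP algorithm $n$ times, each time designating a different agent as the VIP and dividing only the \emph{leftover} cake from the previous rounds. The key observation is that an envy-free allocation is ``composable'' across disjoint sub-cakes: if we partition the cake into disjoint parts and run an envy-free allocation on each part, then giving each agent the \emph{union} of the pieces it received across all parts yields an envy-free allocation of the whole, because every agent's value is additive over the disjoint parts.

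First I would set up the iteration as in Algorithm~\ref{alg:4-agents-disc}: let $C^{(0)}$ be the entire cake, and for $i = 1,\dots,n$, rename \agent{i} to be the VIP, run the $\envyfreevip(n,M)$ algorithm on the current leftover $C^{(i-1)}$, obtaining an allocation $(X_1^{(i)},\dots,X_n^{(i)})$ of (a subset of) $C^{(i-1)}$; then let $C^{(i)}$ be the part of $C^{(i-1)}$ that remained unallocated in round $i$. Finally, set $X_j = \bigcup_{i=1}^{n} X_j^{(i)}$ for each agent $j$. Since the $C^{(i-1)}$ are nested and each round's pieces come from $C^{(i-1)} \setminus C^{(i)}$, all the sets $X_j^{(i)}$ (over all $i,j$) are pairwise disjoint, so $(X_1,\dots,X_n)$ is a valid allocation with disconnected pieces.

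Next I would verify the two required properties. \emph{Envy-freeness:} for any agents $j,k$, additivity of $V_j$ gives $V_j(X_j) = \sum_{i=1}^n V_j(X_j^{(i)}) \geq \sum_{i=1}^n V_j(X_k^{(i)}) = V_j(X_k)$, where each inequality $V_j(X_j^{(i)}) \geq V_j(X_k^{(i)})$ holds because round $i$'s allocation is envy-free. \emph{Proportionality $1/M$:} consider agent $j$ and the round $i=j$ in which $j$ is the VIP, run on leftover $C^{(j-1)}$. The VIP guarantee says $V_j(X_j^{(j)}) \geq V_j(C^{(j-1)})/M$. It remains to show $V_j(C^{(j-1)}) \geq V_j([0,1])/ \binom{?}{}$... actually the clean way: in \emph{every} round $i < j$, agent $j$ was itself an agent in an envy-free allocation, hence its value is at least $1/n$ of the cake being divided in that round, so $j$ ``keeps up'' — but more directly, I would argue that the leftover never loses too much value \emph{to agent $j$}: in round $i$, the allocated portion $C^{(i-1)}\setminus C^{(i)}$ is split envy-freely among $n$ agents, so agent $j$ receives at least a $1/n$ fraction \emph{of that allocated portion} in its own valuation; hence across rounds $1,\dots,j-1$ agent $j$ already accumulates value, and in round $j$ it gets at least $V_j(C^{(j-1)})/M$. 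Combining, and using $M \geq n$, a short accounting shows $V_j(X_j) = \sum_i V_j(X_j^{(i)}) \geq 1/M$; the cleanest bound just uses that the total value $V_j$ assigns to all allocated portions across all rounds, from agent $j$'s perspective, is at least... Here one can simply observe: either $V_j(C^{(j-1)}) \geq 1/n \geq 1/M$ of the whole cake was still available when $j$ became VIP (giving $V_j(X_j^{(j)}) \geq 1/(nM)$, which is not quite enough alone) — so the correct argument is the accumulative one, summing the $\geq 1/n$-of-each-round's-pie contributions in rounds before $j$ plus the $1/M$-of-leftover in round $j$, which telescopes to at least $1/M$ of the whole because $1/M \le 1/n$.

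The main obstacle is pinning down the proportionality bookkeeping: one must be careful that the ``$1/M$ of the \emph{leftover}'' guarantee from the VIP round combines correctly with what agent $j$ already harvested in earlier rounds, and that the worst case (agent $j$ harvesting nothing but exactly its fair-share in early rounds and the leftover shrinking to value slightly above $1/M$) still clears the $1/M$ threshold. The \#queries claim is immediate: $n$ invocations of a $T(n)$-query algorithm use $n\cdot T(n)$ queries, plus a bounded number of \emph{eval} queries to track the leftover, which is absorbed in the asymptotics. I would conclude by noting that applying this lemma to Algorithm~\ref{alg:4-agents} (which is $\envyfreevip(4,4)$ with a constant number of queries) yields Theorem~\ref{thm:4-agents-disc}.
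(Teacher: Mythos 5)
Your proposal is correct and follows essentially the same route as the paper: run the VIP algorithm $n$ times on successive remainders with a rotating VIP, use additivity for envy-freeness, and for proportionality combine the pigeonhole $\geq 1/n \geq 1/M$ share of each earlier round's allocated portion with the $\geq 1/M$ VIP guarantee on the leftover in agent $j$'s own round, so the $1/M$-fractions of disjoint parts covering the whole cake sum to at least $1/M$. The accumulative accounting you settle on (after discarding the insufficient single-round bound) is exactly the paper's inductive argument.
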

\begin{proof} (generalizing an idea of \citet{Saberi2009Cutting}). The idea is to use the existing algorithm $n$ times, each time on the remainder of the previous time and with a different agent as the VIP. This ensures that all agents enjoy the VIP proportion of $1/M$.

Let $C$ be the original cake. Run $\envyfreevip(n,M)$ on $C$ with agent \agent{1} as the VIP. The result is an allocation of a certain subset of $C$ (say, $C'\subseteq C$) with the following properties:
\begin{itemize}
\item The allocation of $C'$ is envy-free.
\item Hence, by the pigeonhole principle, every agent \agent{i} has a value of at least $V_i(C')/n\geq V_i(C')/M$.
\item Moreover, \agent{1} holds a value of at least $V_1(C)/M$.
\end{itemize}

If $C'=C$, then we are done since every agent \agent{i} holds a value of at least $V_i(C)/M$. Otherwise, there is a remainder, $\overline{C'}=C\setminus C'$, that should be divided. Run $\envyfreevip(n,M)$ on that remainder with \agent{2} as the VIP. The result is an allocation of a certain subset $C''\subseteq \overline{C'}$ with the following properties:
\begin{itemize}
\item The allocation of $C''$ is envy-free.
\item Hence, by the pigeonhole principle, every agent \agent{i} holds a value of at least $V_i(C'')/n \geq V_i(C'')/M$.
\item Moreover, \agent{2} has a value of at least $V_2(\overline{C'})/M$.
\end{itemize}

Combining the two previous allocations, we now have an allocation of $C' \cup C''$, with the following properties:

\begin{itemize}
\item The allocation of $C' \cup C''$ is envy-free (since it is a combination of two envy-free allocations).
\item Hence, by the pigeonhole principle, every \agent{i} has a value of at least $V_i(C'\cup C'')/n \geq V_i(C'\cup C'')/M$.
\item \agent{1} (still) has a value of at least $V_1(C)/M$, since nothing was taken from him.
\item \agent{2} has a value of at least $V_2(C')/M + V_2(\overline{C'})/M$, which is at least $V_2(C)/M$.
\end{itemize}

So after the second division, we have an envy-free division in which both \agent{1} and \agent{2} hold at least $1/M$ of their total cake value.

If $C'\cup C''=C$ then we are done. Otherwise, there is a remainder $\overline{C'\cup C''}$ that should be divided. Continue in the same way: run $\envyfreevip(n,M)$ on that remainder with agent \agent{3} as the VIP, then with \agent{4} as the VIP, and so on. It is easy to prove by induction that, after at most $n$ runs, all agents have at least $1/M$ of their total cake value.
\end{proof}

The Weak Reduction Lemma is most useful in the case $M=n$. Note that an $\envyfree(n,n)$ allocation is both envy-free \emph{and proportional}. The Weak Reduction Lemma implies that such an allocation can be found using an algorithm which guarantees a value of at least $1/n$ to a \emph{single} VIP agent. 

The 4-agent algorithm of Section \ref{sec:4-agents} finds an $\envyfreevip(4,4)$ allocation using at most 6 cuts and 65 queries, so an $\envyfree(4,4)$ allocation can be found using at most $6*4=24$ cuts and $65*4=260$ queries. This proves our Theorem \ref{thm:4-agents-disc}.

\section{Disconnected pieces and \MakeLowercase{\LARGE {$n$}} agents} \label{sec:disconnected-n}
The following lemma allows us to approach an $\envyfreevip(n,n)$ allocation to any desired precision.

\begin{algorithm}
\caption{Finding \envyfreevip(n,\frac{n}{1-\epsilon}) allocations with disconnected pieces.}
\label{alg:n-agents-disc}
Rename the agents such that the VIP is \agent{1}.

Let $C' = C$.

For $t$ = 1 to $\lceil (2^{n-2}+1)\ln{(1/\epsilon)}/n \rceil$:

\hspace{1cm} Divide $C'$ using Algorithm \ref{alg:n-agents};

\hspace{1cm} Let $C' = $ the subset that  remained unallocated.
\end{algorithm}

\begin{lemma} (Strong Reduction Lemma)
For every $n$, $M>n$ and $\epsilon>0$:

If there is an algorithm for finding $\envyfreevip(n,M)$ allocations using $T(n)$ queries,

then there is an algorithm for finding $\envyfreevip(n, \frac{n}{1-\epsilon})$ allocations using $\factor \cdot T(n)$ queries.
\end{lemma}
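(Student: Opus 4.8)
The plan is to follow the structure of Algorithm \ref{alg:n-agents-disc}: repeatedly invoke the assumed $\envyfreevip(n,M)$ algorithm, each time on the part of the cake left unallocated by the previous invocation, always keeping the designated VIP (renamed $\agent{1}$) in the VIP role. Write $C_0=C$ for the whole cake, so $V_1(C_0)=1$, and let $C_k$ be the still-unallocated remainder after $k$ invocations; invocation $k$ divides $C_{k-1}$ (renormalizing $V_i(C_{k-1})$ to $1$) and leaves $C_k\subseteq C_{k-1}$. Two observations carry the proof. First, the union of the $k$ partial allocations produced so far is envy-free: each agent's bundle is a disjoint union of one piece per round, and since in every round $j$ agent $i$ weakly prefers its own round-$j$ piece to agent $i'$'s round-$j$ piece, additivity of $V_i$ gives $V_i(\text{bundle of }i)\ge V_i(\text{bundle of }i')$. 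So the output is envy-free no matter how many rounds we run. Second, the assumed guarantee means that in round $k$ the VIP is handed a piece worth (in original, unrenormalized terms) at least $V_1(C_{k-1})/M$.

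For the value guarantee I would track two running quantities: $w_k:=V_1(C_k)$, the VIP's value of what is still unallocated, and $g_k$, the VIP's total accumulated value after $k$ rounds; note $g_k$ is nondecreasing and $g_0=0$. The VIP guarantee gives $g_k\ge g_{k-1}+w_{k-1}/M$. The same pigeonhole argument that makes every envy-free allocation of the whole cake proportional, applied here to the VIP's valuation of the combined envy-free allocation of $C\setminus C_k$ (sum the $n$ envy-freeness inequalities in the measure $V_1$), gives $g_k\ge V_1(C\setminus C_k)/n=(1-w_k)/n$, hence $w_{k-1}\ge 1-n\,g_{k-1}$. If $g_{k_0}\ge 1/n$ for some round $k_0\le t$ we are already done; otherwise $1-n\,g_{k-1}\ge 0$ throughout, and substituting into $g_k\ge g_{k-1}+w_{k-1}/M$, multiplying by $n$ and rearranging yields
\[
1-n\,g_k\;\le\;\left(1-n\,g_{k-1}\right)\left(1-\frac{n}{M}\right).
\]
Since $M>n$, the factor $1-n/M$ lies in $(0,1)$, and iterating from $g_0=0$ gives $1-n\,g_t\le(1-n/M)^t$ in all cases.

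Finally I would take the number of rounds to be $t=\lceil\factor\rceil$. Using $1-x\le e^{-x}$, one gets $(1-n/M)^t\le e^{-nt/M}\le e^{-\ln(1/\epsilon)}=\epsilon$, so $g_t\ge(1-\epsilon)/n$; combined with envy-freeness this is precisely an $\envyfreevip(n,\frac{n}{1-\epsilon})$ allocation. The total cost is $t$ invocations of the assumed algorithm, i.e.\ $\lceil\factor\rceil\cdot T(n)$, which is $\factor\cdot T(n)$ up to the rounding. The step I expect to be the real crux, and the one to state carefully, is the coupling of the two inequalities on $g_k$: the black-box VIP guarantee alone only forces $w_k$ to shrink at rate $1-1/M$ and would cost about $M\ln(1/\epsilon)$ rounds, while pigeonhole alone says nothing about how fast the remainder disappears; it is the fact that the VIP must pick up a $1/M$ fraction of a remainder that is itself at least $1-n\,g_{k-1}$ that upgrades the effective rate to $1-n/M$ and yields the advertised query bound.
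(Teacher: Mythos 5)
Your proposal is correct and follows essentially the same route as the paper: iterate the $\envyfreevip(n,M)$ algorithm on successive remainders with a fixed VIP, use envy-freeness of the cumulative allocation plus pigeonhole to get $w_{k-1}\ge 1-n\,g_{k-1}$ (the paper's Lemma \ref{claim:vttag}), deduce the geometric contraction of the deficit (the paper states this as the closed-form bound $V_t\ge\frac{1-(1-n/M)^t}{n}$ proved by induction), and choose $t\approx\factor$ via $\ln(1-n/M)<-n/M$. The only differences are cosmetic (tracking the deficit $1-n g_k$ directly instead of the paper's induction with a slack constant, and making the envy-freeness of the combined allocation explicit, which the paper relegates to the Weak Reduction Lemma).
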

\begin{proof}
The main idea is to use the existing algorithm many times, each time on the remainder of the previous time, with the same agent as the VIP. The value of the VIP agent grows like a geometric series and converges to $1/n$. Hence, after a sufficient number of runs, the VIP agent's value is at least $(1-\epsilon)/n$.

The proof uses the following notation:
\begin{itemize}
\item $t$ --- the number of times the $\envyfreevip(n,M)$ algorithm has been run on successive remainders. 
\item $C'_t$ ($t\geq 1$) --- the part of $C$ allocated at time $t$. 
\item $C_t (t\geq 0)$ --- the total cake allocated up to and including time $t$ ($C_t :=\cup_{j=1}^t{C'_j}$).
\item $V'_t$ ($t\geq 1$) --- the value given to the VIP agent at time $t$.
\item $V_t$ ($t\geq 0$) --- the total value held by the VIP agent after time $t$ ($V_t:=\sum_{j=1}^t{V'_j}$).
\end{itemize}

We first prove that, in every time $t\geq 1$:
\begin{align}
\label{claim:vttag}
	V'_t\geq [1-n V_{t-1}]/M
\end{align}
\begin{proof}
Since all allocations are envy-free, the cumulative allocation of $C_{t-1}$ is also envy-free. This means that the VIP agent, like all other agents, holds at least a proportional share of it:  $$V_{t-1} \geq V(C_{t-1})/n$$
At time $t$, the cake that remains to be divided is $C\setminus C_{t-1}$. The VIP agent receives at least a fraction $1/M$ of it: $$V'_t\geq [1-V(C_{t-1})]/M$$
Combining the previous two inequalities gives the desired inequality.
\end{proof}

Next we prove that in every time $t\geq 0$:
\begin{align}
\label{claim:vt}
V_t \geq \frac{1-(1-n/M)^t}{n}
\end{align}
\begin{proof}
By induction on $t$. For $t=0$, by definition $V_0=0$. Suppose the claim is true for $t$, so there is a constant $d \geq 0$ such that:
\[
V_t = \frac{1-(1-n/M)^t}{n} + d
\]
By inequality (\ref{claim:vttag}):
\[
V'_{t+1} \geq [1 - n V_t]/M
\]
By the induction assumption, $1-n V_t = (1-n/M)^t-nd$. Hence:
\[
V'_{t+1} \geq [(1-n/M)^t]/M - nd/M 
\]
So: 
\[
V_{t+1} = V_t + V'_{t+1} \geq \frac{1-(1-n/M)^t+(n/M)(1-n/M)^t}{n} + d(1-n/M)
\]
Because $M\geq n$, the rightmost term is positive and we get the desired inequality:
\[
V_{t+1} \geq \frac{1-(1-n/M)^{t+1}}{n}
\]
\end{proof}

By inequality (\ref{claim:vt}), to get a value of at least $V_t\geq (1-\epsilon)/n$, it is sufficient to choose $t$ such that:
\[
(1-n/M)^t \leq \epsilon
\]
The latter inequality is true whenever:
\[
t \geq \frac{\ln{\epsilon}}{\ln(1-n/M)} = \frac{\ln(1/\epsilon)}{-\ln(1-n/M)}
\]
By the log inequality: $\ln(1-n/M)<-n/M$, so it is sufficient that:
\[
t\geq \frac{\ln{(1/\epsilon)}}{n/M} = \frac{M\ln{(1/\epsilon)}}{n}
\]
So it is sufficient to run our $\envyfreevip(n,M)$ algorithm $\factor$ times. 
\qed
\end{proof}

By combining the two reduction lemmas we get:
\begin{corollary}
For every $n$, $M>n$ and $\epsilon>0$:

If there is an algorithm for finding $\envyfreevip(n,M)$ allocations using $T(n)$ queries,

then there is an algorithm for finding $\envyfree(n, \frac{n}{1-\epsilon})$ allocations using $n\cdot\factor \cdot T(n) \approx M\ln(1/\epsilon) \cdot T(n)$ queries.
\end{corollary}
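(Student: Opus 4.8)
The plan is to derive the Corollary by composing the two reduction lemmas just established; no new construction is needed. Starting from the hypothesized $\envyfreevip(n,M)$ algorithm, which runs in $T(n)$ queries, I would first invoke the Strong Reduction Lemma. Its hypothesis ($M>n$) is exactly our assumption, so it produces an algorithm that finds $\envyfreevip(n,\frac{n}{1-\epsilon})$ allocations in $\factor\cdot T(n)$ queries.

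Next I would apply the Weak Reduction Lemma to this new algorithm, taking its parameter to be $M' = \frac{n}{1-\epsilon}$. We may assume $\epsilon<1$ (otherwise the target $\frac{n}{1-\epsilon}$ is meaningless), so that $M'\geq n$ and the Weak Reduction Lemma applies. It converts the single-VIP guarantee into a guarantee for all $n$ agents at the cost of a factor $n$ in the number of queries, yielding an $\envyfree(n,\frac{n}{1-\epsilon})$ algorithm that uses
\[
n\cdot\factor\cdot T(n) \;=\; M\ln(1/\epsilon)\cdot T(n)
\]
queries --- precisely the stated bound, since the factor $n$ introduced by the Weak Reduction cancels the $1/n$ hidden inside $\factor$.

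The one point to be careful about is the order of the two steps: one must boost the VIP's share from $1/M$ up to $(1-\epsilon)/n$ first (Strong Reduction), and only afterwards propagate that share to all agents (Weak Reduction). The reverse order would require turning an $\envyfree$ algorithm back into an $\envyfreevip$ one, which is not among our tools, and would not produce the claimed query count. Since both reduction lemmas have already been proved, there is no real obstacle here --- the Corollary is a one-line composition of the two, and the only "calculation" is the cancellation noted above.
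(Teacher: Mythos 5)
Your proposal is correct and matches the paper's intent exactly: the corollary is obtained by composing the Strong Reduction Lemma (boosting the VIP's share to $(1-\epsilon)/n$) and then the Weak Reduction Lemma with parameter $\frac{n}{1-\epsilon}\geq n$, and the factor $n$ cancels the $1/n$ in the query count just as you note. The paper gives no further argument beyond this composition, so there is nothing to add.
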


In the algorithm of Section \ref{sec:n-agents}, the first cutter cuts $2^{n-2}+1$ equal pieces. Hence, the resulting allocations are  $\envyfreevip(n,2^{n-2}+1)$. The total number of queries is $O(2^n)$. Hence:

\begin{corollary}
For every $n$ and $\epsilon>0$, there exists an algorithm for finding $\envyfree(n, \frac{n}{1-\epsilon})$ allocations using $O(4^n\ln(1/\epsilon))$ queries.
\end{corollary}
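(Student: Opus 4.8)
\emph{Proof proposal.} The plan is to obtain the statement by feeding the $n$-agent algorithm of Section~\ref{sec:n-agents} into the Corollary obtained by combining the Weak and Strong Reduction Lemmas. I would first dispose of the small cases: for $n\le 2$ an \envyfree(n,n) allocation is found in $O(1)$ queries trivially (hand the whole cake to the single agent, or cut-and-choose), and for $n=3$ Theorem~\ref{thm:3-agents} yields an \envyfree(3,3) allocation in a constant number of queries. Since $n\le n/(1-\epsilon)$ for $\epsilon\in(0,1)$, each of these is in particular an \envyfree(n,\frac{n}{1-\epsilon}) allocation, and $O(1)=O(4^n\ln(1/\epsilon))$. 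Treating $n\le 3$ separately is in any case forced by the reduction: for $n=3$ the value $2^{n-2}+1=3=n$ lies outside the range $M>n$ required by the Strong Reduction Lemma. So the substantive case is $n\ge 4$, for which $2^{n-2}+1>n$.

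The first real step is to verify that Algorithm~\ref{alg:n-agents} produces \envyfreevip(n,2^{n-2}+1) allocations with $\agent{1}$ as VIP. Envy-freeness is already established (Lemma~\ref{lemma:unknownn}, resolved at the end by Lemma~\ref{lemma:unknown1}: the final preference graph has a saturated matching). For the VIP guarantee, note that the very first action is $\agent{1}$:\equalize(2^{n-2}+1), which cuts the cake into $2^{n-2}+1$ pieces each worth exactly $1/(2^{n-2}+1)$ to $\agent{1}$. Every later action only trims pieces already on the table, so the later actions make at most $\sum_{i=0}^{n-3}2^i=2^{n-2}-1$ cuts in total, which is strictly fewer than $2^{n-2}+1$; hence at least one of $\agent{1}$'s original pieces is never touched and is still worth $1/(2^{n-2}+1)$ to $\agent{1}$. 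In the final matching $\agent{1}$ receives one of its \emph{best} pieces, whose value is therefore at least $1/(2^{n-2}+1)$. As observed in Section~\ref{sec:n-agents}, the whole algorithm uses $O(2^n)$ queries.

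The second step is a one-line substitution. Since $M:=2^{n-2}+1>n$ for $n\ge 4$, the combined Corollary applies with this $M$ and with $T(n)=O(2^n)$, yielding an algorithm for \envyfree(n,\frac{n}{1-\epsilon}) allocations that uses $M\ln(1/\epsilon)\cdot T(n)=(2^{n-2}+1)\ln(1/\epsilon)\cdot O(2^n)=O(4^n\ln(1/\epsilon))$ queries, exactly as claimed: the two exponential factors $2^{n-2}$ and $2^n$ multiply to $\Theta(4^n)$, and the $\ln(1/\epsilon)$ factor — which enters through the bound $t\le\factor$ on the number of rounds in the Strong Reduction Lemma — simply rides along.

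I do not anticipate a genuine obstacle; the only point needing (brief) care is the VIP claim of the first step, namely that the agent performing the initial equal cut keeps a whole original piece and thus enjoys proportion $1/(2^{n-2}+1)$ rather than the generic EFP proportion $1/2^{n-1}$ that the other agents receive. Everything else is bookkeeping: confirming $M>n$ once $n\ge 4$, collapsing $(2^{n-2}+1)\cdot 2^n$ to $O(4^n)$, and checking that the small-$n$ cases are subsumed by the stated asymptotics.
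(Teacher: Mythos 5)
Your proposal is correct and follows essentially the same route as the paper: it plugs the \envyfreevip(n,2^{n-2}+1) guarantee of Algorithm~\ref{alg:n-agents} (with $T(n)=O(2^n)$ queries) into the corollary obtained by combining the Weak and Strong Reduction Lemmas, giving $M\ln(1/\epsilon)\cdot T(n)=O(4^n\ln(1/\epsilon))$. You merely spell out two points the paper leaves implicit --- the verification that the first cutter keeps an untouched original piece (hence the VIP bound) and the separate treatment of $n\le 3$ where $2^{n-2}+1\le n$ --- which is fine but not a different argument.
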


This completes the proof of our Theorem \ref{thm:n-agents-disc}.

\section{Future Work}
The main question left open by the present paper is:
\begin{quote}
	Is there a bounded-time algorithm for finding an envy-free and proportional allocation with connected pieces for 4 or more agents?
\end{quote}
The most recent advancement we are aware of was made by Aziz and Mackenzie at 29/7/2016.\footnote{http://arxiv.org/abs/1604.03655v7} They developed an algorithm that finds an envy-free allocation with connected pieces in which each agent receives a value of at least $1/(3n)$.

\appendix
\section*{APPENDIX}
\setcounter{section}{0}

\section{Automatically-generated proof for 4-agent envy-free-VIP algorithm} \label{sec:proof-4-agents}

For convenience, we repeat here the algorithm of Section \ref{sec:4-agents}:

\begin{framed}
Alice:\equalize(4)
\dirtree{
.1 One of: .
.2 Bob:\equalize(2); Carl:\equalize(2) .
.2 Bob:\equalize(3); Carl:\equalize(2) .
.2 Carl:\equalize(2); Bob:\equalize(2) .
.2 Carl:\equalize(3); Bob:\equalize(2) .
}
\end{framed}

Below, when the proof says e.g. that "B:\equalize(2)... always succeeds", it means that the execution branch starting with "Bob:\equalize(2)" necessarily results in a preference graph in which Alice, Bob and Carl each prefer two pieces. Then, by Lemma \ref{lem:unknown1} an envy-free allocation exists. Conversely, when the proof says that "this must fail", it means that the execution branch necessarily does not result in such a preference graph. The proof systematically checks all possible preference relations and proves that in all cases, at least one of the four execution branches must succeed. The notation is explained in Section \ref{sec:4-agents}. E.g, "4B" is what remains of piece 4 after it is cut by Bob doing \equalize(2), "3CC" is what remains of piece 3 after it is cut by Carl doing \equalize(3), etc.

The source code of the program used to generate the proof is available in https://github.com/erelsgl/envy-free . We emphasize that the proof can be read and verified without the source code, so the correctness of the proof does not depend on the correctness of the code.

\begin{scriptsize}
\begin{verbatim}

Initially, agent A cuts four equal pieces:  1,2,3,4 .
Assume w.l.o.g. that B's preferences are 1<2<3<4 .
Consider the following 24 cases regarding the preferences of C:

CASE 1 OF 24 : C's order is 4<3<2<1 :
  B:Equalize(2) makes B's best pieces: 3=4B. This always succeeds.

CASE 2 OF 24 : C's order is 4<3<1<2 :
  B:Equalize(2) makes B's best pieces: 3=4B. This always succeeds.

CASE 3 OF 24 : C's order is 4<2<3<1 :
  B:Equalize(2) makes B's best pieces: 3=4B. This always succeeds.

CASE 4 OF 24 : C's order is 4<2<1<3 :
  B:Equalize(2) makes B's best pieces: 3=4B. This must fail because of C.
    C:Equalize(2) makes C's best pieces: 1=3C. This always succeeds.

CASE 5 OF 24 : C's order is 4<1<3<2 :
  B:Equalize(2) makes B's best pieces: 3=4B. This always succeeds.

CASE 6 OF 24 : C's order is 4<1<2<3 :
  B:Equalize(2) makes B's best pieces: 3=4B. This must fail because of C.
    C:Equalize(2) makes C's best pieces: 2=3C. This always succeeds.

CASE 7 OF 24 : C's order is 3<4<2<1 :
  B:Equalize(2) makes B's best pieces: 3=4B. This always succeeds.

CASE 8 OF 24 : C's order is 3<4<1<2 :
  B:Equalize(2) makes B's best pieces: 3=4B. This always succeeds.

CASE 9 OF 24 : C's order is 3<2<4<1 :
  B:Equalize(2) makes B's best pieces: 3=4B. This always succeeds.

CASE 10 OF 24 : C's order is 3<2<1<4 :
  B:Equalize(2) makes B's best pieces: 3=4B. This may fail in 1 case : C prefers 4B to 1 2 3 .
   Assume the case   C prefers 4B to 1 2 3. Then:
    C:Equalize(2) makes C's best pieces: 1=4C, so globally: 4C<4B . This always succeeds.

CASE 11 OF 24 : C's order is 3<1<4<2 :
  B:Equalize(2) makes B's best pieces: 3=4B. This always succeeds.

CASE 12 OF 24 : C's order is 3<1<2<4 :
  B:Equalize(2) makes B's best pieces: 3=4B. This may fail in 1 case : C prefers 4B to 1 2 3 .
   Assume the case   C prefers 4B to 1 2 3. Then:
    C:Equalize(2) makes C's best pieces: 2=4C, so globally: 4C<4B . This always succeeds.

CASE 13 OF 24 : C's order is 2<4<3<1 :
  B:Equalize(2) makes B's best pieces: 3=4B. This always succeeds.

CASE 14 OF 24 : C's order is 2<4<1<3 :
  B:Equalize(2) makes B's best pieces: 3=4B. This must fail because of C.
    C:Equalize(2) makes C's best pieces: 1=3C. This always succeeds.

CASE 15 OF 24 : C's order is 2<3<4<1 :
  B:Equalize(2) makes B's best pieces: 3=4B. This always succeeds.

CASE 16 OF 24 : C's order is 2<3<1<4 :
  B:Equalize(2) makes B's best pieces: 3=4B. This may fail in 1 case : C prefers 4B to 1 2 3 .
   Assume the case   C prefers 4B to 1 2 3. Then:
    C:Equalize(2) makes C's best pieces: 1=4C, so globally: 4C<4B . This always succeeds.

CASE 17 OF 24 : C's order is 2<1<4<3 :
  B:Equalize(2) makes B's best pieces: 3=4B. This must fail because of C.
    C:Equalize(2) makes C's best pieces: 4=3C. This must fail because of B.
      B:Equalize(3) makes B's best pieces: 2=3BB=4BB. This may fail in 1 case : C prefers 1 to 2 3BB 4BB .
       Assume the case   C prefers 1 to 2 3BB 4BB. Then:
        C:Equalize(3) makes C's best pieces: 1=4CC=3CC, so globally: 4BB<4CC 3BB<3CC . This always succeeds.

CASE 18 OF 24 : C's order is 2<1<3<4 :
  B:Equalize(2) makes B's best pieces: 3=4B. This may fail in 2 cases : C prefers 4B to 1 2 3;  C prefers 3 to 1 2 4B .
   Assume the case   C prefers 4B to 1 2 3. Then:
    C:Equalize(2) makes C's best pieces: 3=4C, so globally: 4C<4B . This may fail in 1 case : B prefers 3 to 2 1 4C .
     Assume the case   B prefers 3 to 2 1 4C. Then:
      B:Equalize(3) makes B's best pieces: 2=3BB=4BB. This may fail in 1 case : C prefers 1 to 2 3BB 4BB .
       Assume the case   C prefers 1 to 2 3BB 4BB. Then:
        C:Equalize(3) makes C's best pieces: 1=3CC=4CC, so globally: 3BB<3CC 4BB<4CC . This always succeeds.
   Assume the case   C prefers 3 to 1 2 4B. Then:
    C:Equalize(2) makes C's best pieces: 3=4C, so globally: 4B<4C . This may fail in 1 case : B prefers 4C to 2 1 3 .
     Assume the case   B prefers 4C to 2 1 3. Then:
      B:Equalize(3) makes B's best pieces: 2=3BB=4BB. This may fail in 1 case : C prefers 1 to 2 3BB 4BB .
       Assume the case   C prefers 1 to 2 3BB 4BB. Then:
        C:Equalize(3) makes C's best pieces: 1=3CC=4CC, so globally: 3BB<3CC 4BB<4CC . This always succeeds.

CASE 19 OF 24 : C's order is 1<4<3<2 :
  B:Equalize(2) makes B's best pieces: 3=4B. This always succeeds.

CASE 20 OF 24 : C's order is 1<4<2<3 :
  B:Equalize(2) makes B's best pieces: 3=4B. This must fail because of C.
    C:Equalize(2) makes C's best pieces: 2=3C. This always succeeds.

CASE 21 OF 24 : C's order is 1<3<4<2 :
  B:Equalize(2) makes B's best pieces: 3=4B. This always succeeds.

CASE 22 OF 24 : C's order is 1<3<2<4 :
  B:Equalize(2) makes B's best pieces: 3=4B. This may fail in 1 case : C prefers 4B to 1 2 3 .
   Assume the case   C prefers 4B to 1 2 3. Then:
    C:Equalize(2) makes C's best pieces: 2=4C, so globally: 4C<4B . This always succeeds.

CASE 23 OF 24 : C's order is 1<2<4<3 :
  B:Equalize(2) makes B's best pieces: 3=4B. This must fail because of C.
    C:Equalize(2) makes C's best pieces: 4=3C. This must fail because of B.
      B:Equalize(3) makes B's best pieces: 2=3BB=4BB. This may fail in 1 case : C prefers 2 to 1 3BB 4BB .
       Assume the case   C prefers 2 to 1 3BB 4BB. Then:
        C:Equalize(3) makes C's best pieces: 2=4CC=3CC, so globally: 4BB<4CC 3BB<3CC . This always succeeds.

CASE 24 OF 24 : C's order is 1<2<3<4 :
  B:Equalize(2) makes B's best pieces: 3=4B. This may fail in 2 cases : C prefers 4B to 1 2 3;  C prefers 3 to 1 2 4B .
   Assume the case   C prefers 4B to 1 2 3. Then:
    C:Equalize(2) makes C's best pieces: 3=4C, so globally: 4C<4B . This may fail in 1 case : B prefers 3 to 1 2 4C .
     Assume the case   B prefers 3 to 1 2 4C. Then:
      B:Equalize(3) makes B's best pieces: 2=3BB=4BB. This may fail in 1 case : C prefers 2 to 1 3BB 4BB .
       Assume the case   C prefers 2 to 1 3BB 4BB. Then:
        C:Equalize(3) makes C's best pieces: 2=3CC=4CC, so globally: 3BB<3CC 4BB<4CC . This always succeeds.
   Assume the case   C prefers 3 to 1 2 4B. Then:
    C:Equalize(2) makes C's best pieces: 3=4C, so globally: 4B<4C . This may fail in 1 case : B prefers 4C to 1 2 3 .
     Assume the case   B prefers 4C to 1 2 3. Then:
      B:Equalize(3) makes B's best pieces: 2=3BB=4BB. This may fail in 1 case : C prefers 2 to 1 3BB 4BB .
       Assume the case   C prefers 2 to 1 3BB 4BB. Then:
        C:Equalize(3) makes C's best pieces: 2=3CC=4CC, so globally: 3BB<3CC 4BB<4CC . This always succeeds.

Q.E.D!
\end{verbatim}

\end{scriptsize}

\section{Envy-free division of an entire cake} \label{sec:entire-cake}
Recently, \citet{Aziz2015Discrete} have made an important breakthrough in the search for bounded-time envy-free cake-cutting algorithms. They presented the first bounded-time algorithm for envy-free cake-cutting of an \emph{entire cake} to 4 agents. In this appendix, we use our envy-free-VIP algorithm of Section \ref{sec:4-agents} to present their results in a simpler and more general way. 

\subsection{The domination graph}
The main new concept required for envy-free division of an entire cake is \textbf{domination}.\footnote{This concept originated with \citet{Brams1996Fair}, who used the term "irrevocable advantage". Aziz and Mackenzie introduced the shorter term "domination".} We say that Alice dominates Bob if Alice won't envy Bob even if the entire remaining cake is given to Bob. 

To see how a domination relation is created, consider again the envy-free-VIP algorithm for three agents, presented in Subsection \ref{sub:equalize}. Alice does \equalize(3) and Bob does \equalize(2), cutting his favorite piece, say \piece{3}, to make it equal to his second-best, say \piece{2}. Mark the trimmed piece \agentspiece{B}{3} and the trimmings \piece{4}, so that $\piece{3} = \piececut{3}{B} \cup \piece{4}$. Suppose Carl's best piece is \piece{2}. So Carl takes \piece{2}, Bob takes \agentspiece{B}{3}, Alice takes \piece{1}, and \piece{4} remains for the next round. Now, the following equalities hold for Alice:

$$V_A(\piece{1}) = V_A(\piece{2}) = V_A(\piece{3}) = V_A(\agentspiece{B}{3})+V_A(\piece{4})$$
Hence, even if the entire remainder (\piece{4}) is given to Bob, Alice will not envy. This means that Alice dominates Bob.

The domination relation can be described by a \textbf{domination graph}. In a domination graph, the nodes are the agents and an edge between two agents means that the source node dominates the target node. 

\begin{figure}
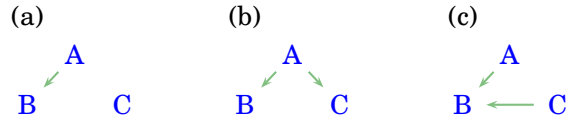

	\begin{center}
	\begin{psmatrix}[colsep=1cm,rowsep=0mm]
		\begin{psmatrix}[colsep=0cm,rowsep=0mm]
			(a) \\
			\threeagentsdomination{}
			\like{A}{B}
		\end{psmatrix}
		& 
		\begin{psmatrix}[colsep=0cm,rowsep=0mm]
			(b) \\
			\threeagentsdomination{}
			\like{A}{B}\like{A}{C}
		\end{psmatrix}
		& 
		\begin{psmatrix}[colsep=0cm,rowsep=0mm]
			(c) \\
			\threeagentsdomination{}
			\like{A}{B}\like{C}{B}
		\end{psmatrix}
	\end{psmatrix}
	\end{center}
	\caption{Domination graphs with 3 agents. \label{fig:domination-3}}
\end{figure}

Three domination graphs are shown in Figure \ref{fig:domination-3}. The graph (a) is generated after a single run of the algorithm, in which Alice dominates Bob (as explained above). Graph (b) can possibly occur after a second run of the same algorithm, if in the second run Carl takes the trimmed piece so Alice dominates Carl too. Graph (c) can occur after a second run of the algorithm in which Carl is the cutter, if Bob takes the trimmed piece. 

In general, when an envy-free algorithm is repeatedly executed, each time on the remainder of the previous time, edges are added to the domination graph but never removed. 

\subsection{Solvable domination graphs}
A convenient approach to the envy-free division problem is to reduce a given instance to a simpler instance that we already know how to solve. Formally:

\begin{definition}
A domination graph of an envy-free cake-cutting problem for $n$ agents is called \textbf{solvable} if, once the state of the division arrives at that domination graph, the problem can be reduced to one or more envy-free cake-cutting problems for less than $n$ agents.
\end{definition}
The domination graphs in Figure \ref{fig:domination-3} (b) and (c) are solvable: in (b), the problem can be reduced to a 2-agent division between Bob and Carl, since Alice dominates both of them; in (c), the entire remainder can be given to Bob, since he is dominated by both Alice and Carl. In general:

\begin{lemma}
\label{lem:solvable-separation}
If, in a domination graph for $n$ agents, there is a partition of the agents to two nonempty groups such that every agent in group \#2 dominates all agents in group \#1, then the domination graph is solvable.
\end{lemma}
\begin{proof}
An envy-free division of the entire cake can be found by letting the agents in group \#1 (whose number is less than $n$) divide the remainder among them in an envy-free way.
\end{proof}

\begin{figure}
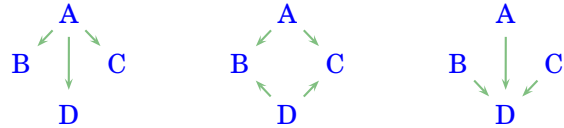

	\begin{center}
	\begin{psmatrix}[colsep=1cm,rowsep=0mm]
		\begin{psmatrix}[colsep=0cm,rowsep=0mm]
			\fouragentsdomination{}
			\like{A}{B}\like{A}{C}\like{A}{D}
		\end{psmatrix}
		& 
		\begin{psmatrix}[colsep=0cm,rowsep=0mm]
			\fouragentsdomination{}
			\like{A}{B}\like{A}{C}
			\like{D}{B}\like{D}{C}
		\end{psmatrix}
		& 
		\begin{psmatrix}[colsep=0cm,rowsep=0mm]
			\fouragentsdomination{}
			\like{A}{D}\like{B}{D}\like{C}{D}
		\end{psmatrix}
	\end{psmatrix}
	\end{center}
	\caption{Domination graphs with 4 agents. All are solvable by Lemma \ref{lem:solvable-separation}. \label{fig:domination-4}}
\end{figure}
Figure \ref{fig:domination-4} shows three graphs for 4 agents that are solvable by Lemma \ref{lem:solvable-separation}.

Another kind of solvable domination graphs is described in the following lemma.

\begin{lemma}
\label{lem:solvable-sequence}
If there is a sequence of $n-1$ agents, $A_2,\dots,A_{n}$, such that every agent dominates the following agents (every agent $A_i$ dominates every agent $A_j$ for all $2\leq i<j$), then the domination graph is solvable.
\end{lemma}
\begin{proof}
The remaining cake can be divided in the following way: $A_1$ (the agent not in the sequence) cuts the cake to $n$ equal parts. Then, the agents take pieces in the order $A_n,\dots,A_{2},A_1$. The agents in the sequence are not envious, because every agent dominates the agents that took pieces before him, and prefers his piece to the pieces taken by agents after him. $A_1$ is also not envious because all pieces are equal in his eyes.
\end{proof}

\begin{figure}
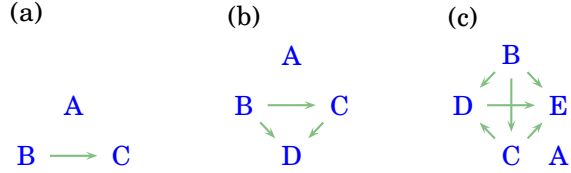

	\begin{center}
	\begin{psmatrix}[colsep=1cm,rowsep=0mm]
		\begin{psmatrix}[colsep=0cm,rowsep=0mm]
			(a) \\ \\ \\
			\threeagentsdomination{}
			\like{B}{C}
		\end{psmatrix}
		& 
		\begin{psmatrix}[colsep=0cm,rowsep=0mm]
			(b) \\
			\fouragentsdomination{}
			\like{B}{C}\like{B}{D}
			\like{C}{D}
		\end{psmatrix}
		& 
		\begin{psmatrix}[colsep=0cm,rowsep=0mm]
			(c) \\
			\fiveagentsdomination{}
			\like{B}{C}\like{B}{D}\like{B}{E}
			\like{C}{D}\like{C}{E}
			\like{D}{E}
		\end{psmatrix}
	\end{psmatrix}
	\end{center}
	\caption{Domination graphs solvable by Lemma \ref{lem:solvable-sequence}. \label{fig:domination-345}}
\end{figure}

Figure \ref{fig:domination-345} shows graphs for 3, 4 and 5 agents, that are solvable by Lemma \ref{lem:solvable-sequence}. In (a), the sequence is \{B,C\}, and the solution of Lemma \ref{lem:solvable-sequence} yields the well-known Selfridge-Conway algorithm. In (b) the sequence is \{B,C,D\} and in (c) it is \{B,C,D,E\}.

Combining the two previous lemmas gives a stronger lemma:

\begin{lemma}
\label{lem:solvable-n-2}
If there is a set of $n-1$ agents, each of whom dominates $n-2$ agents, then the domination graph is solvable.
\end{lemma}
\begin{proof}
Suppose that each of the agents $A_2,\dots,A_{n}$ dominates $n-2$ agents. Consider the following two cases:

Case \#1: all agents $A_2,\dots,A_{n}$ dominate $A_1$. Then by Lemma \ref{lem:solvable-separation} the domination graph is solvable.

Case \#2: there is an agent in $A_2,\dots,A_{n}$, say $A_2$, that does not dominate $A_1$. Hence, this agent must dominate all the other agents $A_3,\dots,A_{n}$. 

Regarding this smaller set of agents, there are again two cases:

Case \#2.1: all agents $A_3,\dots,A_{n}$ dominate both $A_1$ and $A_2$. Then by Lemma \ref{lem:solvable-separation} the domination graph is solvable.

Case \#2.2: there is an agent in $A_3,\dots,A_{n}$, say $A_3$, that does not dominate $A_1$ or does not dominate $A_2$. Hence, this agent must dominate all the other agents $A_4,\dots,A_{n}$. 

Regarding this smaller set of agents, we can continue the same line of reasoning. Finally we conclude that, either the domination graph is solvable by Lemma \ref{lem:solvable-separation}, or there exists a sequence of agents ($A_2,\dots,A_n$) such that each agent dominates the following agents in that sequence; then by Lemma \ref{lem:solvable-sequence} the domination graph is solvable.
\end{proof}

Lemma \ref{lem:solvable-n-2} implies that the problem of envy-free division of an entire cake among $n$ agents can be reduced to the following problem: 

\begin{quote}
Find an envy-free allocation of a part of a cake, such that a pre-specified VIP agent dominates $n-2$ agents.
\end{quote}
We now show that this reduced problem can be solved for $n=4$ agents.

\subsection{The $Equalize^*$ query}
First, we want to guarantee that after \emph{every} run of an envy-free-VIP algorithm, the VIP agent (the cutter) will dominate one of the agents. In order to guarantee this, we must change the semantics of the Equalize query. We call the changed query $Equalize^*$. An $Equalize^*(k)$ query asks an agent to cut his best $k-1$ pieces, such that the trimmed pieces will be equivalent to the agent's $k$-th best piece.   For example, an $Equalize^*(2)$ query to Bob in the above example implies the following question: "where would you cut piece \piece{3}, your currently favorite piece, such that the trimmed piece will be equivalent to \piece{2}?".   Note that in this case (in contrast to the Equalize query), the trimmings may be more valuable than the trimmed piece.   When $Equalize^*$ queries are used, the agents are not allowed to choose the trimmings; the trimmings are kept for later iterations.  The agents are only allowed to take the trimmed pieces (hence, in contrast to the algorithms using Equalize, there is no guarantee on the proportionality of the allocation after a single run of the algorithm). Since the number of original pieces is $n$, all trimmed pieces must be taken. 

Based on the above observation, we now generalize a lemma proved by \citet{Aziz2015Discrete} from 4 to $n$ agents.

\subsection{Creating a single domination-edge from the VIP}

\begin{lemma}
Let $C$ be a cake and $X$ an envy-free division of a subset $C'\subset C$ among $n$ agents. Denote the remaining cake by $\overline{C'}=C\setminus C'$. Suppose that for two agents (e.g. Alice and Bob) the following holds:
\begin{align*}
V_A(X_A) - V_A(X_B) \geq V_A(\overline{C'}) / k
\end{align*}
where $k<n$. Then, after running an envy-free-VIP algorithm a bounded number $f(n)$ times with Alice as the VIP, Alice will dominate Bob.
\end{lemma}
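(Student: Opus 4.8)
The plan is to iterate the envy-free-VIP algorithm on the successive remainders, always with Alice as the VIP --- exactly the scheme of the Strong Reduction Lemma --- while tracking, after $t$ iterations, the value $R_t$ (in Alice's eyes) of the cake still undivided and Alice's \emph{advantage} $D_t := a_t - b_t$ over Bob, where $a_t,b_t$ are the values (in Alice's eyes) of Alice's and Bob's cumulative allocated pieces. Initially $R_0 = V_A(\overline{C'})$ and $D_0 = V_A(X_A) - V_A(X_B) \geq R_0/k$ by hypothesis. Because the value measures are additive and Bob's piece is disjoint from the undivided part, ``Alice dominates Bob'' is precisely the inequality $D_t \geq R_t$, so it suffices to reach this after a number of iterations $f(n)$ depending only on $n$.

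First I would record two monotonicity facts. (i) The advantage never decreases: each iteration produces an envy-free allocation of a subset of the current remainder, disjoint from everything allocated so far, so cumulative values add; if in iteration $t+1$ Alice gains value $\alpha$ and Bob gains value $\beta$ (both measured by $V_A$), then $\alpha \geq \beta$ because Alice does not envy Bob \emph{in that run}, hence $D_{t+1} = D_t + \alpha - \beta \geq D_t$, and thus $D_t \geq D_0 \geq R_0/k$ for all $t$. (As in the Weak Reduction Lemma, a union of envy-free allocations is envy-free, so the cumulative allocation stays a legitimate partial envy-free allocation.) (ii) The remainder decays geometrically: whenever $R_t > 0$, running the envy-free-VIP algorithm on the current remainder gives the VIP Alice a piece worth at least $R_t/M$ to her, where $M = M(n)$ is the proportionality parameter of the chosen envy-free-VIP algorithm (e.g. $M = 2^{n-2}+1$ for the algorithm of Section \ref{sec:n-agents}); since the new remainder is contained in the old one minus Alice's new piece, $R_{t+1} \leq R_t(1 - 1/M)$, so $R_t \leq R_0(1-1/M)^t \leq R_0\, e^{-t/M}$ using $-\ln(1-1/M) \geq 1/M$.

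Combining the two, take $f(n) := \lceil M(n)\ln n \rceil$, which exceeds $M\ln k$ since $k < n$. Then $R_{f(n)} \leq R_0\, e^{-f(n)/M} \leq R_0/k \leq D_0 \leq D_{f(n)}$, i.e. $a_{f(n)} \geq b_{f(n)} + V_A(\overline{C'})$ for the remainder at that stage, which is exactly domination of Bob by Alice. If $R_t = 0$ occurs earlier the claim is immediate ($D_t \geq 0 = R_t$) and the process stops. Since $f(n)$ depends only on $n$ and each iteration uses a bounded number of queries, so does the whole procedure.

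The points needing care are all bookkeeping: that the cumulative allocation stays envy-free and that per-agent values add across iterations (both from disjointness of the allocated subsets), and that the VIP guarantee ``at least $1/M$ of the cake being divided'' is read against the current \emph{remainder}, not merely the part of it that ends up allocated. I do not see a genuine obstacle here; in spirit the proof just runs the geometric-series estimate of the Strong Reduction Lemma until the leftover shrinks below Alice's initial head start.
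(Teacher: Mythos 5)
Your proof is correct and follows essentially the same route as the paper: iterate the envy-free-VIP algorithm with Alice as VIP, observe that her value of the remainder decays geometrically per run (the paper uses the per-run VIP guarantee $1/n$, you a generic $1/M(n)$), and conclude that after a bounded number $f(n)$ of runs the remainder falls below her initial head start over Bob. Your explicit bookkeeping that the advantage $D_t$ is non-decreasing (because Alice does not envy Bob in any single run, and values add by disjointness) is exactly the step the paper leaves implicit.
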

\begin{proof}
Each run of an envy-free-VIP algorithm gives the VIP (Alice) a value of at least $1/n$. Hence, the value of the remaining cake decreases by a factor of at least $(n-1)/n$. Let $f(n)=\frac{\log n}{\log(n)-\log(n-1)}$. Note that $f(n)>\frac{\log k}{\log(n/(n-1))}$. Hence, after $f(n)$ iterations, the value of the remaining cake for Alice is at most $V_A(\overline{C'}) / k$. When this happens, the difference between Alice's value to Bob's value (in Alice's eyes) is more than the value of the remainder; hence Alice dominates Bob.
\end{proof}

Motivated by this lemma, we say that Alice \emph{$k$-dominates} Bob, if $V_A(X_A) - V_A(X_B) \geq V_A(\overline{C'}) / k$.   If Alice $k$-dominates Bob (where $k<n$), then after a number of steps which is a bounded function of $n$, Alice will dominate Bob. Hence, from now on, we add an edge in the domination graph whenever the source node $k$-dominates the target node for some $k<n$.

\begin{lemma}
\label{lem:one-domination-edge}
After a run of an envy-free-VIP algorithm for $n$ agents, the VIP $k$-dominates at least one other agent, where $k<n$.
\end{lemma}
\begin{proof}
An envy-free-VIP algorithm starts by the VIP agent (say, Alice) cutting the cake to $n$ equal pieces. Then, a certain number $k<n$ of pieces are trimmed. Consider the following two cases.

(a) $k=0$: all $n$ pieces are taken with no trimmings. Then, the division is fully envy-free and no cake is left, so domination is trivial.

(b) $1\leq k<n$: the divided cake is $C'\subseteq C$, and the remainder is $\overline{C'}=C\setminus C'$. This remainder is the union of the $k$ trimmings. Mark by \agentspiece{X}{i} the trimming taken from piece \piece{i}. Then:

$$\overline{C'} = \cup_{i=1}^k {\agentspiece{X}{i}}$$
By the additivity of Alice's value measure:

$$V_A(\overline{C'}) = \sum_{i=1}^k V_A({\agentspiece{X}{i}})$$
Assume, without loss of generality, that the trimming of piece \piece{1} has the largest value for Alice (Aziz and Mackenzie call such piece the \textbf{significant piece}). Then, by the pigeonhole principle, its value for Alice is at least $1/k$ the value of the remaining cake, so:

$$V_A(\agentspiece{X}{1}) \geq V_A(\overline{C'}) / k$$
This means that Alice $k$-dominates the agent that took piece \piece{1}.
\end{proof}

Lemma \ref{lem:one-domination-edge} guarantees that, after each run of an envy-free-VIP algorithm, the domination graph contains an edge going from the VIP agent to another agent. Hence, $n$ domination edges can be created by running the algorithm $n$ times with different VIP agents. But this may be insufficient to attain a solvable domination graph. The worst case is that, whenever a certain agent is the VIP, the same other agent takes the significant piece and hence the same domination edge is added again and again. Fortunately, \citet{Aziz2015Discrete} found a way to shift domination edges to other agents. 

\subsection{Creating two domination-edge from the VIP}
Suppose there are $n$ agents and Alice is the VIP. Suppose that after the first run, Alice dominates Bob. Our goal now is to make Alice dominate another agent. We run the algorithm again, this time keeping Bob as the last agent (the agent that does not trim). The other $n-2$ agents trim some of the pieces, until each of the first $n-1$ agents prefers at least two pieces (see Section \ref{sub:equalize} for a description on how it is done when $n=3$ and Section \ref{sec:4-agents} for the case $n=4$). Now, Bob has to choose a piece. Suppose w.l.o.g. that Bob's best piece is \piece{1} and his second-best piece is \piece{2}. There are two cases:

\textbf{Easy case}: \piece{1} is not the significant piece.  Then, another agent takes the significant piece and is $k$-dominated by Alice. Now two different domination edges emanate from Alice, as we wanted.

\textbf{Hard case}: \piece{1} is the significant piece.  Consider now what happens if Bob takes \piece{2} instead of \piece{1}. The other agents will not care, since each of the other agents prefers at least two pieces. But then Bob might envy the agent (say, Carl) who takes \piece{1}. In this case, we say that Bob \emph{competes} with Carl on the significant piece.  Let $\Delta V = V_B(\piece{1})-V_B(\piece{2})$. If Bob takes \piece{1} then Bob has an advantage of at least $\Delta V$ over Carl; if Bob takes \piece{2} then Bob has an envy of $\Delta V$ at Carl (Carl does not envy Bob in either case since Carl prefers two pieces).

Now, suppose the algorithm is run again and again with the same VIP, and each time we fall into the same hard case in which the same Bob prefers the significant piece. Eventually (after at most $n$ runs), Bob competes with an agent with whom he already competed in the past (e.g, Bob competes Carl again). Now, we ask Bob in which of these two runs the $\Delta V$ is larger. If the $\Delta V$ was larger in the first run, then in the second run we give Bob his second-best piece; if the $\Delta V$  was larger in the second run, then in the first run we change the allocation and give Bob his second-best piece. In either case, Bob will not be envious since the larger $\Delta V$ cancels the envy caused by the smaller $\Delta V$.

The above discussion can be summarized in the following lemma:

\begin{lemma}
\label{lem:two-domination-edges}
After at most $n$ runs of an envy-free-VIP algorithm for $n$ agents, the VIP $k$-dominates at least two other agents.
\end{lemma}

Plugging Lemma \ref{lem:two-domination-edges} into Lemma \ref{lem:solvable-n-2} yields Aziz \& Mackenzie's envy-free cake-cutting algorithm for 4 agents.

\begin{acks}
We are grateful for useful discussions and suggestions from several members of the Stack Exchange network  (http://stackexchange.com), particularly, Raphael Reitzig, Sebastian Wild, Abhishek Bansai, InstructedA, FrankW, Pal Gronas Drange (Pal GD), Anand S Kumar, babou, Bernard, BVMR, D.W., David Richerby, dmbarbour, Herbert, Jair Taylor, kcrisman, Luke Mathieson, Maroun Maroun, Rick Decker, Tsuyoshi Ito, Xoff and Yuval Filmus.

We are grateful to three anonymous PC members of the AAMAS 2015 conference and to two anonymous referees of the TALG journal for their helpful comments and suggestions.
\end{acks}

\bibliographystyle{ACM-Reference-Format-Journals}


\end{document}